\algrenewcommand\algorithmicrequire{\textbf{Input:}}
\algrenewcommand\algorithmicensure{\textbf{Output:}}
\tikzstyle{graph}=[>=stealth',semithick,inner sep=2pt,minimum width=6mm]
\tikzstyle{node}=[draw,circle,fill=white]
\newcommand{\proc}[1]{\textsc{#1}}
\newcommand{\mreach}{\rightleftharpoons}
\newcommand{\reach}{\rightharpoonup}
\newcommand{\succs}{\mathit{Successors}}
\newcommand{\preds}{\mathit{Predecessors}}
\newcommand{\preno}[1]{\mathit{Predicates}(#1)}
\newcommand{\compute}{\textnormal{\textsc{compute}}\xspace}
\newcommand{\computeVps}{\textnormal{\textsc{compute$V_p$s}}\xspace}
\newcommand{\visit}{\textnormal{\textsc{visit}}\xspace}
\newcommand{\dod}[3]{{#1}\xrightarrow{\smash{\raisebox{-2pt}{\textsf{\tiny DOD}}}}\{#2,#3\}}
\newcommand{\ntscd}[2]{{#1}\xrightarrow{\smash{\raisebox{-2pt}{\textsf{\tiny NTSCD}}}}{#2}}
\begin{document}

%{{{ title & authors & abstract & keywords

%% Title information
%\title[Short Title]{Long Title}
\title{Fast Computation of Strong Control Dependencies}
                                        %% [Short Title] is optional;
                                        %% when present, will be used in
                                        %% header instead of Full Title.
%\titlenote{with title note}             %% \titlenote is optional;
                                        %% can be repeated if necessary;
                                        %% contents suppressed with 'anonymous'
%\subtitle{Subtitle}                     %% \subtitle is optional
%\subtitlenote{with subtitle note}       %% \subtitlenote is optional;
                                        %% can be repeated if necessary;
                                        %% contents suppressed with 'anonymous'

%% Author information
%% Contents and number of authors suppressed with 'anonymous'.
%% Each author should be introduced by \author, followed by
%% \authornote (optional), \orcid (optional), \affiliation, and
%% \email.
%% An author may have multiple affiliations and/or emails; repeat the
%% appropriate command.
%% Many elements are not rendered, but should be provided for metadata
%% extraction tools.

%% Author with single affiliation.
\author{Marek Chalupa}
% \authornote{with author1 note}          %% \authornote is optional;
                                        %% can be repeated if necessary
%\orcid{nnnn-nnnn-nnnn-nnnn}             %% \orcid is optional
\affiliation{
%  \position{Position1}
  \department{Faculty of Informatics}              %% \department is recommended
  \institution{Masaryk University}            %% \institution is required
%  \streetaddress{Street1 Address1}
  \city{Brno}
%  \state{State1}
%  \postcode{Post-Code1}
  \country{Czech Republic}                    %% \country is recommended
}
\email{chalupa@fi.muni.cz}          %% \email is recommended

%% Author with two affiliations and emails.
\author{David Klaška}
%\authornote{with author2 note}          %% \authornote is optional;
                                        %% can be repeated if necessary
%\orcid{nnnn-nnnn-nnnn-nnnn}             %% \orcid is optional
\affiliation{
%  \position{Position2a}
  \department{Faculty of Informatics}              %% \department is recommended
  \institution{Masaryk University}            %% \institution is required
  \city{Brno}
  \country{Czech Republic}                    %% \country is recommended
}
%\email{david.klaska@mail.muni.cz}

\author{Jan Strejček}
\orcid{0000-0001-5873-403X}
\affiliation{
  \department{Faculty of Informatics}              %% \department is recommended
  \institution{Masaryk University}            %% \institution is required
  \city{Brno}
  \country{Czech Republic}
}
\email{strejcek@fi.muni.cz}

\author{Lukáš Tomovič}
\affiliation{
  \department{Faculty of Informatics}              %% \department is recommended
  \institution{Masaryk University}            %% \institution is required
  \city{Brno}
  \country{Czech Republic}
}
%\email{strejcek@fi.muni.cz}

%% Abstract
%% Note: \begin{abstract}...\end{abstract} environment must come
%% before \maketitle command
\begin{abstract}
  We introduce new algorithms for computing non-termination sensitive
  control dependence (NTSCD) and decisive order dependence
  (DOD). These relations on control flow graph vertices have many
  applications including program slicing and compiler
  optimizations. Our algorithms are asymptotically faster than the
  current algorithms. We also show that the original algorithms for
  computing NTSCD and DOD may produce incorrect results. We
  implemented the new as well as fixed versions of the original
  algorithms for the computation of NTSCD and DOD and we
  experimentally compare their performance and outcome. Our algorithms
  dramatically outperforms the original ones.
\end{abstract}

%% 2012 ACM Computing Classification System (CSS) concepts
%% Generate at 'http://dl.acm.org/ccs/ccs.cfm'.
\begin{CCSXML}
<ccs2012>
<concept>
<concept_id>10003752.10010124.10010138.10010143</concept_id>
<concept_desc>Theory of computation~Program analysis</concept_desc>
<concept_significance>500</concept_significance>
</concept>
<concept>
<concept_id>10011007.10010940.10010992.10010998.10011000</concept_id>
<concept_desc>Software and its engineering~Automated static analysis</concept_desc>
<concept_significance>500</concept_significance>
</concept>
</ccs2012>
\end{CCSXML}

\ccsdesc[500]{Theory of computation~Program analysis}
\ccsdesc[500]{Software and its engineering~Automated static analysis}
%% End of generated code

%% Keywords
%% comma separated list
\keywords{program analysis, control dependence, NTSCD, DOD}  %% \keywords are mandatory in final camera-ready submission

%% \maketitle
%% Note: \maketitle command must come after title commands, author
%% commands, abstract environment, Computing Classification System
%% environment and commands, and keywords command.

%}}}

\maketitle

%{{{ introduction

\section{Introduction}

Analysis of control dependencies in programs plays an important role
in compiler optimizations~\citep{Kennedy90,Ferrante87,Cytron91,Darte00},
program analysis~\citep{Jackson94,Lechenet18a,ChalupaS19}
and program transformations,
mainly program slicing~\citep{Weiser84,Horwitz90,Amtoft08}.

Since the 70s, when the first paper about control dependence appeared,
there have been invented many different notions of control dependence.
The one that is as of today considered as the ``classical`` control dependence
was introduced by \citet{Ferrante87} more than thirty years ago.

Informally, two statements in a program are control dependent if one controls
the execution of the other in some way.
This is typically the case for \textbf{if} statements and their bodies,
but control dependence can emerge also when a loop infinitely delays
the execution of some statement.
Consider, for example, the control flow graph of a program in
Figure~\ref{fig:cd_example}.
Node $a$ controls whether $c$ or $b$ is going to be executed,
so $c$ and $b$ are control dependent on $a$ (note that the convention
is to display dependence edges in the ``controls'' direction).
Similarly, $b$ controls the execution of $c$ and $d$, as these nodes
may be bypassed by going from $b$ to $e$.
However, $c$ does not control $d$ as any path from $c$ hits $d$ eventually.

\begin{figure}[t]
\begin{tikzpicture}[graph]
\node[draw, circle] (a) at (0, 0)   {$a$};
\node[draw, circle] (b) at ( 1.5, -.5) {$b$};
\node[draw, circle] (c) at ( 1.5, .5) {$c$};
\node[draw, circle] (d) at ( 3, .5) {$d$};
\node[draw, circle] (e) at ( 4.5, 0) {$e$};

%\draw[->] (0, 0.5) to (a);
\draw[->] (a) to (b);
\draw[->] (a) to (c);
\draw[->] (c) to (d);
\draw[->] (d) to (e);
\draw[->] (b) to (c);
\draw[->] (d) edge[loop, looseness=6] (d);
\draw[->] (b) to (e);

\draw[->, bend left, red] (a) to (c);
\draw[->, bend right, red] (a) to (b);
\draw[->, bend right, red] (b) to (c);
\draw[->, red] (b) to (d);
\draw[->, bend left, dotted, red] (d) to (e);
\draw[->, bend right, red, dotted] (b) to (e);
\draw[->] (d) edge[loop, red, dotted, looseness=10] (d);
\end{tikzpicture}
\caption{An example of a control flow graph and control dependencies
         (red edges). The dotted control dependencies are non-termination sensitive.}
\label{fig:cd_example}
\end{figure}
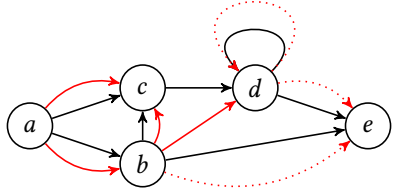

The described dependencies were those that originate
due to \textbf{if} statements.  The figure shows also dependencies
that arise due to possibly non-terminating loops (dotted red edges).
These dependencies are called \emph{non-termination sensitive}
and belong to the class of \emph{strong control dependencies}~\citep{Danicic11}.
For example, node $e$ is (non-termination sensitive) control dependent
on node $b$, because when $b$ follows the $c$ successor,
the program may get stuck in the loop over $d$ and never reach $e$.
As the calling suggests, these dependencies are sensitive to
(non-)termination of programs.

In this paper, we are concerned with the computation of strong control dependencies.
In particular, we are concerned with the computation of
\emph{non-termination sensitive control dependence (NTSCD)}
and \emph{decisive order dependence (DOD)},
both introduced by \citet{Ranganath05,Ranganath07}.

We report on flaws of the original Ranganath et al.'s algorithms
that result in incorrectly computed dependencies and suggest how to fix them.
Moreover, we provide new algorithms that are easier
and faster in theory and also in practice.
Our algorithm for the computation of NTSCD has the worst-case execution time complexity
$O(|V|^2)$ whereas the algorithm of \citet{Ranganath07} has the complexity
$O(|V|^4 \cdot log(|V|))$, where $|V|$ is the number of nodes in the
control flow graph.
Similarly, for the computation of DOD we provide an algorithm that runs
in $O(|V|^3)$ while the Ranganath et al.'s algorithm runs in $O(|V|^5 \cdot log(|V|))$.
Our algorithm is also optimal in the sense that it computes ternary relation
that can have size $O(|V|^3)$ in $O(|V|^3)$ time.

The rest of the paper is structured as follows.
At the end of this section, we summarize our contributions and
discuss the related work.
Section~\ref{sec:prelim} contains the necessary preliminaries.
Section~\ref{sec:ntscd} defines NTSCD and discusses the flaw in the
Ranganath et al.'s algorithm. Subsequently we describe our new algorithm
for the computation of NTSCD.
Similarly structured is Section~\ref{sec:dod} that is concerned with
the DOD relation. We define DOD, discuss the known algorithm for its computation
and then we develop a theory that underpins our algorithm for the computation
of DOD. The algorithm is presented at the end of the section.
In Section~\ref{sec:cc} is a discussion about control closures,
which are as of today considered as the state-of-the-art approach
to the computation of control dependencies.
Section~\ref{sec:experiments} contains experimental evaluation of our work.
The paper is concluded with Section~\ref{sec:conclusion}.

\subsection{Contributions}
The main contributions of the paper are as follows:
\begin{enumerate}
\item We have identified flaws in algorithms by \citet{Ranganath07}
      for the computation of NTSCD and DOD relations.
      We describe why the algorithms may fail and suggest a fix for each of them.
 \item We designed and implemented new algorithms for the computation
       of NTSCD and DOD that are asymptotically faster than the algorithms
       by Ranganath et al. Our algorithm for the computation of NTSCD
       runs in $O(|V|^2)$ and for the computation of DOD in $O(|V|^3)$.
       This is an improvement over the complexity of Ranganath et al.'s
       algorithms which run in $O(|V|^4 \cdot log(|V|))$ and
       $O(|V|^5 \cdot log(|V|))$, respectively.
       Moreover, our algorithm for the computation of DOD is optimal
       in the sense that it computes results of the size $O(|V|^3)$
       in $O(|V|^3)$ time.
\item We provide a solid experimental results that our algorithms are
      usable in practice and significantly faster than the previously
      known algorithms.
\end{enumerate}

\subsection{Related Work}

There are many similar yet different notions of control dependence.
The first paper concerned with control dependence is due to
\citet{Denning77} who used control dependence to
certify that flow of information in a program is secure
(as determined by security requirements given as a lattice of flow constraints).

\citet{Weiser84}, \citet{Ottenstein84}, and \citet{Ferrante87} used control
dependencies in program slicing, which is also the motivation for the most of
the latter research in this area.  These papers, generally accepted
as the ``classical'' literature about control dependence,  study algorithms
that work with procedures having a unique exit point and that always terminate.
Such restrictions do not fit modern program structures, so researchers
have been gradually extending the algorithms to more general settings.

\citet{Podgurski90} removed the restriction about termination of the program,
but still work with procedures that have a unique exit point.
\citet{Bilardi96} achieved a generalization of previously known notions of
control dependence by generalizing dominance relation in graphs.

%\todo{Johnson, Pingali: Dependence-Based Program Analysis - construction of CD graph in O(|E|)}

Control dependencies are in most of the cases used at the level of standalone
procedures. However, there are cases when also information about
\emph{interprocedural} control dependencies is needed.
This direction was studied by \citet{Loyall93} and \citet{Harrold98,Sinha01}.

The notion of NTSCD and DOD was founded
in works of \citet{Ranganath05,Ranganath07} in order
to slice reactive systems. Ranaganath et al. introduced also
a \emph{non-termination insensitive control dependence}
%that coincides with the classical notions of control dependence on graphs that have
%a unique exit point.
that generalizes the classical control dependence (that assumes termination
of programs) to graphs without a unique exit point.
Further, they relax the conditions of DOD to
obtain several other order dependencies that may be useful in program slicing.

\citet{Danicic11} provided an elegant generalization of the notion of
control dependence using a so-called control closures.  In their
works, \emph{weak} control closures generalize termination
\emph{in}sensitive control dependence and \emph{strong} control
closures generalize termination sensitive (and thus also NTSCD)
control dependence.  Danicic et al. provide also algorithms for the
computation of weak and strong control closures.

\citet{Lechenet18} proved the correctness of the Danicic et al.'s weak
control closure algorithm in Coq and provided several optimizations
that significantly improve the performance of the algorithm (the
optimized algorithm have been mechanically proved in Why3).  Our
algorithms are concerned with NTSCD and DOD and thus with strong
control closures.

%}}}
%{{{ preliminaries

\section{Preliminaries}\label{sec:prelim}

The paper presents several algorithms working with finite directed
graphs. Here we recall some basic terms.

A \emph{finite directed graph} is a pair $G = (V,E)$, where $V$ is a
finite set of \emph{nodes} and $E\subseteq V\times V$ is a set of
\emph{edges}. If there is an edge $(m,n)\in E$, then $n$ is called a
\emph{successor} of $m$, $m$ is a \emph{predecessor} of $n$, and the
edge is an \emph{outgoing edge} of $m$. Given a node $n$, the set of
all its successors is denoted by $\succs(n)$ and the set of all its
predecessors is denoted by $\preds(n)$. A \emph{path} from a node
$n_1$ is a nonempty finite or infinite sequence
$n_1n_2\ldots\in V^+\cup V^\omega$ of nodes such that there is an edge
$(n_i,n_{i+1})\in E$ for each pair $n_i, n_{i+1}$ of adjacent nodes in
the sequence. A path is called \emph{maximal} if it cannot be
prolonged, i.e., if it is infinite or the last node of the path has no
outgoing edge. A node $m$ is \emph{reachable} from a node $n$, written
$n\reach m$, if there exists a finite path such that its first node is
$n$ and its last node is $m$. Further, nodes $m,n$ are \emph{mutually
  reachable}, written $m\mreach n$, whenever $m\reach n$ and
$n\reach m$. The relation $\mreach$ is clearly an equivalence on $V$
and it provides a partition of $V$ into equivalence classes called
\emph{strongly connected components (SCC)}. An SCC $S$ is called
\emph{terminal} if there is no edge leading outside this SCC, i.e.,
$E\cap(S\times (V\smallsetminus S))=\emptyset$. For each set of nodes
$V'\subseteq V$ we define a graph \emph{induced by} $V'$ as the graph
$(V',E')$ where $E'=E\cap(V'\times V')$.  An SCC is called
\emph{trivial} if it induces a graph without any edge. It is called
\emph{nontrivial} otherwise. Note that a trivial SCC has to be a
singleton. When we talk about an SCC, we usually mean the graph
induced by this SCC.

We say that a graph is a \emph{cycle}, if it is isomorphic (i.e.,
identical up to renaming nodes) to a graph $(V,E)$ where
$V=\{n_1,\ldots,n_k\}$ for some $k>0$ and
$E=\{(n_1,n_2),(n_2,n_3),\ldots,(n_{k-1},n_k),(n_k,n_1)\}$.  A cycle
\emph{unfolding} is a path in the cycle that contains each node
precisely once.

In this paper, we consider programs represented by control flow graphs,
where nodes correspond to program statements and edges model the flow
of control between the statements. As control dependence reflects only
the program structure, our definition of a control flow graph does not
contain any statements. Our definition also does not contain any start
or exit nodes as these are not important for the problems we study in
this paper.

\begin{definition}[Control flow graph]
  A \emph{control flow graph (CFG)} is a finite directed graph
  $G=(V,E)$ where each node $v\in V$ has at most two outgoing
  edges. Nodes with exactly two outgoing edges are called
  \emph{predicate nodes} or simply \emph{predicates}. The set of all
  predicates of a CFG $G$ is denoted by $\preno{G}$.
\end{definition}

Note that the number of edges in each control flow graph is at most
twice the number of its nodes. We often use this fact in the
complexity analysis of presented algorithms.

%\newcommand{\startnd}{\emph{start}\xspace}
%\newcommand{\exitnd}{\emph{exit}\xspace}

%\todo{Z nasledujiciho zachovat jen to, co se pouziva. Irreducible
%  graphs dat spis az do DOD.}  If a CFG has a special node without
%predecessors from which all other nodes are reachable, we denote this
%node as \startnd and call the CFG \startnd-CFG.  \td{will we need
%  that?}  If a CFG has a node \exitnd with no successors that
%represents a regular termination of the program and which is reachable
%from all nodes, we call the CFG \exitnd-CFG.  If the \exitnd node is
%the only node without successors, we say that the CFG has \emph{unique
%  exit property}.\td{will we need that?}

%}}}
%{{{ NTSCD

\section{Non-termination sensitive control dependence}
\label{sec:ntscd}

This section focuses on \emph{non-termination sensitive control
  dependence (NTSCD)} introduced by \citet{Ranganath05}. We start with
its formal definition.
% As the name suggests, this control dependence is
% designed to for programs that do not have to terminate.

\begin{definition}[Non-termination sensitive control dependence]\label{def:ntscd}
  Given a CFG $G=(V,E)$, a node $n\in V$ is \emph{non-termination
    sensitive control dependent} on a predicate node $p\in\preno{G}$,
  written $\ntscd{p}{n}$, if $p$ has two successors $s_1$ and $s_2$
  such that
\begin{itemize}
 \item all maximal paths from $s_1$ contain $n$, and
 \item there exists a maximal path from $s_2$ that does not contain $n$.
\end{itemize}
\label{def:ntscd}
\end{definition}

Before we present our algorithm for the computation of non-termination
sensitive control dependencies, we recall the algorithm of
\citet{Ranganath07} and show that it can produce incorrect results.

\subsection{Algorithm of \citet{Ranganath07} for NTSCD}
\label{ssec:ntscd_ranganath}

The algorithm is presented in Algorithm~\ref{fig:ranganath}. The
central data structure of the algorithm is a two-dimensional array $S$
where for each node $n$ and for each predicate node $p$ with
successors $r$ and $s$, $S[n,p]$ always contains a subset of
$\{t_{pr},t_{ps}\}$.  Intuitively, $t_{pr}$ should be added to
$S[n,p]$ if $n$ appears on all maximal paths from $p$ that start with
the prefix $pr$. The $\mathit{workbag}$ holds the set of nodes $n$ for
which some $S[n,p]$ value has been changed and this change should be
propagated.  The first part of the algorithm initializes the array $S$
with the information that each successor $r$ of a predicate node $p$
is on all maximal paths from $p$ starting with $pr$. The main part of
the algorithm then spreads the information about reachability on all
maximal paths in the forward manner. Finally, the last part computes
the NTSCD relation according to Definition~\ref{def:ntscd} and with use
of the information in $S$.

\begin{algorithm}[t!]
   \begin{algorithmic}[1]
   \Require a CFG $G = (V,E)$ 
   \Ensure a potentially incorrect NTSCD relation stored in $\mathit{ntscd}$
   \Statex 
   \State assume that $S[n,p]=\emptyset$ for all $n\in V$ and $p\in\preno{G}$ \Comment{ Initialization }
   \State $\mathit{workbag} \gets \emptyset$
   \For {$p\in \preno{G)}$} 
     \For {$r \in \succs(p)$}
       \State $S[r, p] \gets \{t_{pr}\}$
       \State $\mathit{workbag} \gets \mathit{workbag} \cup \{r\}$
     \EndFor
   \EndFor\\

   \While {$\mathit{workbag} \not = \emptyset$} \Comment { Computation of $S$ }
     \State $n \gets$ pop from $\mathit{workbag}$
     \If { $\succs(n)$ contains a single successor $s \neq n$ }
       \Comment{ \textcolor{gray}{One successor case} }
       \For { $p \in \preno{G}$ }
         \If { $S[n, p] \smallsetminus S[s, p] \not = \emptyset$ }
           \State $S[s, p] \gets S[s, p] \cup S[n, p]$
           \State $\mathit{workbag} \gets \mathit{workbag} \cup \{s\}$
         \EndIf
       \EndFor
     \EndIf

     \If { $|\succs(n)| > 1$ }
     \Comment{ \textcolor{gray}{Multiple successors case} }
     \For { $m \in V$ }
       \If { $|S[m, n]| = |\succs(n)|$}
         \For { $p \in \preno{G} \smallsetminus \{n\}$}
           \If { $S[n, p] \smallsetminus S[m, p] \not = \emptyset $}
             \State $S[m, p] \gets S[m, p] \cup S[n, p]$
             \State $\mathit{workbag} \gets \mathit{workbag} \cup \{m\}$
           \EndIf
         \EndFor
       \EndIf
     \EndFor
     \EndIf
   \EndWhile\\

   \State $\mathit{ntscd} \gets \emptyset$ \Comment { Computation of NTSCD }
   \For { $n \in V$ } 
     \For { $p \in \preno{G}$ }
       \If { $0 < |S[n, p]| < |\succs(p)|$ } % which means |S[n, p]| = 1 in our case
         \State $\mathit{ntscd} \gets \mathit{ntscd} \cup \{\ntscd{p}{n}\}$ 
       \EndIf
     \EndFor
   \EndFor
\end{algorithmic}
\caption{The NTSCD algorithm by \citet{Ranganath07}}
\label{fig:ranganath}
\end{algorithm}

The complexity of the algorithm is
$O(|E|\cdot|V|^3\cdot\log|V|)$~\cite{Ranganath07} for a CFG
$G=(V,E)$. The $\log|V|$ factor comes from set operations.  Since
every node in CFG has at most 2 outgoing edges, we can simplify the
worst-case complexity to $O(|V|^4\cdot\log|V|)$.

Althought the correctness of the algorithm has been proved
\citep[Theorem~7]{Ranganath05}, we present an example where the
algorithm provides an incorrect answer. Consider the CFG in
Figure~\ref{fig:ranganath_cfg}. The first part of the algorithm
initializes the array $S$ as shown in the figure and sets
$\mathit{workbag}$ to $\{2,6,3,4\}$. Then any node from
$\mathit{workbag}$ can be popped and processed. Let us apply the
policy used for queues: always pop the oldest element in
$\mathit{workbag}$. Hence, we pop $2$ and nothing happens as the
condition on line~22 is not satisfied for any $m$. This also means
that the symbol $t_{12}$ is not propagated any further. Next we pop
$6$, which has no effect as $6$ has no successor. By processing $3$
and $4$, $t_{23}$ and $t_{24}$ are propagated to $S[5,2]$ and $5$ is
added to the $\mathit{workbag}$. Finally, we process $5$ and set
$S[6,2]$ to $\{t_{23},t_{24}\}$. The final content of $S$ is provided
in the figure. Unfortunately, the information in $S$ is sound but
incomplete.  In other words, if $t_{pr}\in S[n,p]$, then $n$ is indeed
on all maximal paths from $p$ starting with $pr$, but the opposite
implication does not hold. In particular, $t_{12}$ is missing in
$S[5,1]$ and $S[6,1]$. This flaw causes that the last part of the
algorithm computes an incorrect NTSCD relation.  More precisely, it
correctly produces dependecies $\ntscd{1}{2}$, $\ntscd{2}{3}$, and
$\ntscd{2}{4}$, but it also incorrectly produces the dependence
$\ntscd{1}{6}$ and it misses the dependence $\ntscd{1}{5}$.

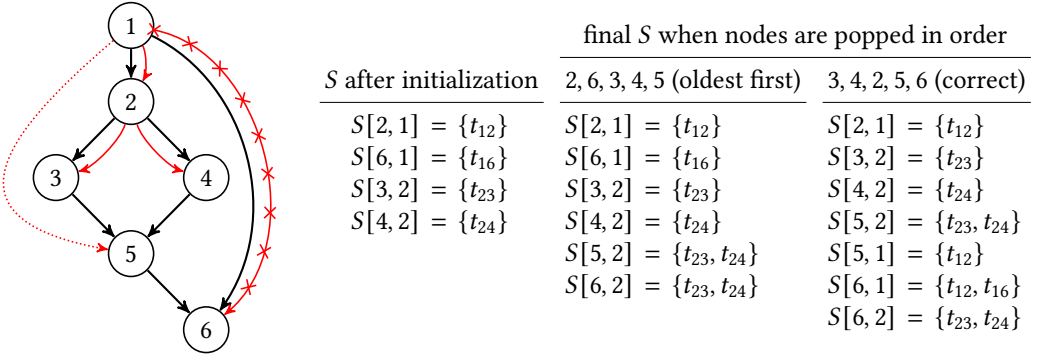
\begin{figure}[t]
\begin{center}
\begin{tikzpicture}[graph]
\node[node] (1) at (-1,  0) {1};
\node[node] (2) at (-1, -1) {2};
\node[node] (3) at (-2, -2) {3};
\node[node] (4) at (0, -2) {4};
\node[node] (5) at (-1, -3) {5};
\node[node] (6) at (0, -4) {6};
\node (7) at (-2.5,-4) {~};

\draw[->,thick] (1) to (2);
\draw[->,thick,bend left=50] (1) to (6);
\draw[->,thick] (2) to (3);
\draw[->,thick] (2) to (4);
\draw[->,thick] (3) to (5);
\draw[->,thick] (4) to (5);
\draw[->,thick] (5) to (6);
%\draw[->] (0, 0.5) to (1);
%\draw[->] (6) to (7);

\draw[->, red, bend left , looseness=0.6] (1) to (2);
\draw[->, red, bend left , looseness=0.6] (2) to (3);
\draw[->, red, bend right, looseness=0.6] (2) to (4);
\draw[->, red, bend left=65] (1) to (6);
\draw[red, bend left=65, decorate, 
  decoration={crosses,segment length=5.15mm, shape size=1.5mm}] (1) to (6);
% \draw[red, out=-40, in=40, decorate,
%   decoration={crosses,segment length=5mm, shape size=1.5mm}] (1) to (6);
\draw[->, red, densely dotted, out=-140, in=170, looseness=2, overlay] (1) to (5);

\node at (6.2,-2) {
  \setlength{\arraycolsep}{2pt}
  $
   \begin{array}{rclp{.5ex}rclp{.5ex}rcl}
   &&&& \multicolumn{7}{c}{\text{final $S$ when nodes are popped in order}}
   \\ \cmidrule{5-11}
   \multicolumn{3}{c}{\text{$S$ after initialization}}
   && \multicolumn{3}{c}{\text{$2,6,3,4,5$ (oldest first)}}
   && \multicolumn{3}{c}{\text{$3,4,2,5,6$ (correct)}}
   \\ \cmidrule{1-3}\cmidrule{5-7}\cmidrule{9-11}
   \quad S[2,1]&=&\{t_{12}\} && S[2,1]&=&\{t_{12}\} && S[2,1]&=&\{t_{12}\}\\
   S[6,1]&=&\{t_{16}\} && S[6,1]&=&\{t_{16}\} && S[3,2]&=&\{t_{23}\}\\
   S[3,2]&=&\{t_{23}\} && S[3,2]&=&\{t_{23}\} && S[4,2]&=&\{t_{24}\}\\
   S[4,2]&=&\{t_{24}\} && S[4,2]&=&\{t_{24}\} && S[5,2]&=&\{t_{23},t_{24}\}\\
   &&&& S[5,2]&=&\{t_{23},t_{24}\} && S[5,1]&=&\{t_{12}\}\\
   &&&& S[6,2]&=&\{t_{23},t_{24}\} && S[6,1]&=&\{t_{12},t_{16}\}\\
   &&&&                        &&&& S[6,2]&=&\{t_{23},t_{24}\}\\
   \end{array}
  $ 
};

\end{tikzpicture}
\end{center}
\caption{An example that shows the incorrectness of the NTSCD
  algorithm by \citet{Ranganath07}. Solid red edges depict the dependencies
  computed by the algorithm when it always pops the oldest element in $\mathit{workbag}$.
  The crossed dependence is incorrect. The dotted dependence is missing in the result.}
\label{fig:ranganath_cfg}
\end{figure}

A neccessary condition to get the correct result is to process $2$
only after $3,4$ are processed and $S[5,2]=\{t_{23},t_{24}\}$. For
example, one obtains the correct $S$ (also shown in the figure) when
the nodes are processed in the order $3,4,2,5,6$.

The algorithm is clearly sensitive to the order of popping nodes from
$\mathit{workbag}$. We are currently not sure whether for each $CFG$
there exists an order that leads to the correct result. An easy way to
fix the algorithm is to ignore the $\mathit{workbag}$ and repeatedly
execute the body of the \textbf{while} loop (lines 12--31) for all
$n\in V$ until the array $S$ reaches a fixpoint. However, this
modification would slow down the algorithm substantially. Computing
the fixpoint needs $O(|V|^3)$ iterations over the loop body (lines
12--13) and one iteration of the body without the $\mathit{workbag}$
handling instructions needs $O(|V|^2)$. Hence, the overall time
complexity of the fixed version is $O(|V|^5)$.

\subsection{New algorithm for NTSCD}

We have designed and implemented an algorithm for the computation of
NTSCD which is correct, significantly simpler and asymptotically
faster than the original algorithm of \citet{Ranganath07}.

Roughly speaking, our algorithm calls for each node $n$ a procedure that identifies all
NTSCD dependencies of $n$ on predicate nodes.
% The procedure runs in
% linear time and our algorithm thus have the complexity $O(|V|^2)$,
% where $V$ is the set of nodes of a given CFG.
%
The procedure works in the following steps.
\begin{enumerate}
\item Color $n$ red.
\item Pick an uncolored node with a positive number of successors such
  that all these successors are red and color the node red. Repeat
  this step until no such node exists.
\item For each predicate node $p$ that has a red successor and an
  uncolored one, output $\ntscd{p}{n}$.
\end{enumerate}
%As you can see, the procedure spreads the information about
%our algorithm spreads the information about
%reachability of $n$ on all maximal paths in a backward manner starting
%from $n$.
Unlike the Ranganath et al.'s algorithm which works in a forward manner,
our algorithm spreads the information about
reachability of $n$ on all maximal paths in the backward direction starting
from $n$.

The algorithm is presented in Algorithm~\ref{alg:algorithm}. The
discussed procedure is called $\compute(n)$ and it actually implements
only the first two steps mentioned above. In the second step, the
procedure does not search over all nodes to pick the next node to
%color. For each node, it maintains the count of its uncolored
%successors. When the count decreases from 1 to 0, the node is colored
color. Instead, it maintains the count of uncolored
successors for each node.  Once the count drops to 0 for a node,
the node is colored red and the search continues with predecessors of this node.
The third step is implemented directly in the main loop of the algorithm.

\begin{algorithm}[t!]
\begin{algorithmic}[1]
   \Require a CFG $G = (V,E)$ 
   \Ensure the NTSCD relation stored in $\mathit{ntscd}$
   \Statex 
   \Procedure{visit}{$n$} \Comment {Auxiliary procedure}
   \State $n.\mathit{counter} \gets n.\mathit{counter} - 1$
   \If {$n.\mathit{counter} = 0$}
     \State $n.\mathit{color} \gets \mathit{red}$
     \For {$m \in \preds(n)$}
       \State \visit($m$)
     \EndFor
   \EndIf
   \EndProcedure\\
   
   \Procedure{compute}{$n$} \Comment {Coloring the graph red for a given $n$}
    \For {$m \in V$} %\Comment{ Initialize }
      \State $m.\mathit{color} \gets \mathit{uncolored}$
      \State $m.\mathit{counter} \gets |\succs(m)|$
    \EndFor\\
    \State $n.\mathit{color} \gets \mathit{red}$ %\Comment {Compute NTSCD for $n$}
    \For {$m \in \preds(n)$}
      \State \visit($m$)
    \EndFor
   \EndProcedure\\

  \State $\mathit{ntscd} \gets \emptyset$ \Comment { Computation of NTSCD }
  \For {$n \in V$} %\Comment{ Compute NTSCD for all nodes }
    \State \compute($n$)
    \For {$p \in \preno{G}$}
      \If {$p$ has a $red$ successor and an $\mathit{uncolored}$ successor}
%        \State mark $n$ control dependent on $p$
        \State $\mathit{ntscd} \gets \mathit{ntscd} \cup \{\ntscd{p}{n}\}$ 
      \EndIf
    \EndFor
  \EndFor
\end{algorithmic}
\caption{The new NTSCD algorithm}
\label{alg:algorithm}
\end{algorithm}

To prove that the algorithm is correct, we basically need to show that
when $\compute(n)$ finishes, a node $m$ is red iff all maximal paths
from $m$ contain $n$. We start with a simple observation.

\begin{lemma}\label{lem:visit}
  After $\compute(n)$ finishes, a node $m$ is red if and only if $m=n$
  or $m$ has a positive number of successors and all of them are red.
\end{lemma}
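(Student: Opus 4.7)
The plan is to reduce the lemma to a simple counting invariant about $\visit$. The case $m = n$ is trivial since $n$ is coloured red directly in $\compute$ (line 17), so I would concentrate on the case $m \neq n$.

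The key observation I would establish first is that for any $m \neq n$, the procedure $\visit(m)$ is invoked exactly once for each successor of $m$ that ever becomes red. This follows by inspecting the two call sites of $\visit$: inside $\compute(n)$, $\visit(m)$ is called for every $m \in \preds(n)$ immediately after $n$ is coloured red, and inside $\visit(s)$, $\visit(m)$ is called for every $m \in \preds(s)$ at the moment $s$ is coloured red (i.e., when its counter hits $0$). Since $\succs(m)$ is a set and every node is coloured red at most once, these calls stand in bijection with the successors of $m$ that become red during the run.

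From this, combined with the fact that $m.\mathit{counter}$ starts at $|\succs(m)|$ and is decremented by one in every invocation, it follows that for $m \neq n$ the counter reaches $0$ exactly when the last successor of $m$ becomes red, and at that instant $m$ itself is coloured red. Both directions of the lemma then fall out. For $(\Leftarrow)$, if $|\succs(m)| > 0$ and every successor of $m$ is red at the end, each such successor triggers one call to $\visit(m)$, the counter is driven to $0$, and $m$ is coloured red. For $(\Rightarrow)$, if $m \neq n$ is red, then $\visit(m)$ must have coloured it, which required $|\succs(m)|$ decrements, each witnessed by a distinct red successor; hence all successors are red and $|\succs(m)| > 0$. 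The exclusion $|\succs(m)| = 0$ is automatic: a node with no outgoing edges is nobody's predecessor and so $\visit(m)$ is never called in that case.

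The one point that requires care, and I expect it to be the main obstacle to a clean write-up, is the situation when $n$ has a self-loop, because then $\visit(n)$ may be invoked even though $n$ is already coloured by $\compute$ directly. I would verify that such calls do no harm: they only reassign $\mathit{red}$ to an already-red $n$, and because the counter can drop to $0$ at most once before going negative, the extra recursion terminates. With that observation in hand, the counting argument for $m \neq n$ is unaffected by whatever the counter of $n$ does, and the lemma follows.
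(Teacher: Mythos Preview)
Your counting argument is the same idea the paper uses, and you go further than the paper by explicitly flagging the self-loop on $n$ as the delicate case. But your dismissal of that case is where the argument breaks. You write that the extra $\visit(n)$ calls ``only reassign $\mathit{red}$ to an already-red $n$'' and that the recursion terminates because the counter reaches $0$ at most once. Both observations are true, but they miss the real damage: when $n.\mathit{counter}$ reaches $0$ inside such a $\visit(n)$ call, the body of the \textbf{if} executes and iterates over $\preds(n)$, calling $\visit(m)$ for every predecessor $m$ of $n$ \emph{a second time}. Each such $m$ already had $\visit(m)$ invoked from line~19 of $\compute$, so its counter is decremented twice on behalf of the single red successor $n$. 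Your claimed bijection between invocations of $\visit(m)$ and red successors of $m$ therefore fails precisely in the situation you singled out.

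Concretely, take $V=\{n,m,t\}$ with edges $(n,n)$, $(m,n)$, $(m,t)$. In $\compute(n)$ the loop at line~18 visits $\preds(n)=\{n,m\}$. The call $\visit(n)$ drops $n$'s counter from $1$ to $0$, re-colours $n$, and recurses over $\preds(n)$ again, so $\visit(m)$ runs and drops $m$'s counter from $2$ to $1$. The outer-loop call $\visit(m)$ then drops it to $0$ and colours $m$ red---yet $m$'s successor $t$ is uncoloured, so the right-hand side of the lemma is false for $m$. In other words the lemma, as stated for the algorithm as written, is not true; the paper's own proof, which never addresses the self-loop case at all, has the same gap. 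A clean repair is at the algorithm level (e.g.\ skip $n$ itself when iterating over $\preds(n)$ at line~18, or set $n.\mathit{counter}$ to a negative sentinel right after line~17), after which your counting argument goes through verbatim.
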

\begin{proof}
  %The statement directly follows from the fact that $\compute(n)$ just
  %implements the first two steps of the three step procedure
  %formulated above.\marek{Hmm, k dukazu, ze implementace $compute$
  %pocita to, co ma, se vyuziva tvrzeni, ze compute implementuje to,
  %co ma :) jen takove rypnuti -- nejsem si jisty, jestli
  %s tema counterama je na prvni pohled jasne, ze $compute$
  %skutecne dela toto.}
  %% The node counters are used only for efficient
  %% detection of nodes with red successors only.  For each node $m$, the
  %% counter is initialized to the number of its successors and it is
  %% decreased by the function call $\visit(m)$ each time a successor of
  %% $m$ gets red. When the counter drops to 0 (i.e., all successors of
  %% the node are red), the node gets red.
  For each node $m$, the counter is initialized to the number of its
  successors and it is decreased by calls to $\visit(m)$ each time a
  successor of $m$ gets red. When the counter drops to 0 (i.e., all
  successors of the node are red), the node is colored red.
  Therefore, if $m$ is red, it got red either on line 17 and $m = n$,
  or $m\neq n$ and $m$ is red because all its successors got red (it must have a
  positive number of successors, otherwise the counter could not be 0
  after its decrement).  In the other direction, if $m = n$, it gets
  red on line 17.  If it has a positive number of successors which all
  get red, the node is colored red by the argument above.
\end{proof}

\begin{theorem}
  After $\compute(n)$ finishes, for each node $m$ it holds that $m$ is
  red if and only if all maximal paths from $m$ contain $n$.
\end{theorem}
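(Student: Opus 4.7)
The plan is to prove the two implications separately, using Lemma~\ref{lem:visit} as the structural bridge between the operational behavior of $\compute$ and the graph-theoretic condition involving maximal paths.

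For the forward direction ($m$ red $\Rightarrow$ all maximal paths from $m$ contain $n$), I would proceed by induction on the order in which nodes become red during $\compute(n)$. The base case is $m=n$, for which every maximal path from $m$ trivially contains $n$. For the inductive step, suppose $m \neq n$ becomes red. By Lemma~\ref{lem:visit}, $m$ has a positive number of successors and all of them are red, each having been colored red strictly earlier than $m$. By the induction hypothesis, every maximal path from each successor of $m$ contains $n$. Since $m$ has at least one outgoing edge, any maximal path from $m$ has the form $m \cdot \pi$ where $\pi$ is a maximal path from some successor of $m$, and hence contains $n$. This step is routine.

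For the backward direction, I would argue by contradiction. Assume all maximal paths from $m$ contain $n$, but $m$ is uncolored. Observe the following key propagation property: if $u$ is uncolored, $u \neq n$, and all maximal paths from $u$ contain $n$, then $u$ must have a successor $u'$ which is uncolored, $u' \neq n$, and all maximal paths from $u'$ contain $n$. Indeed, if $u$ had zero successors then the single-node path $u$ would be maximal and could not contain $n$ (since $u \neq n$), contradicting the assumption; and by Lemma~\ref{lem:visit}, since $u$ is uncolored and has successors, at least one successor $u'$ is not red, i.e., uncolored (so $u' \neq n$, as $n$ is red by line~17). Moreover, prepending $u$ to any maximal path from $u'$ yields a maximal path from $u$, which must contain $n$; since $u \neq n$, this $n$ lies on the tail, so all maximal paths from $u'$ contain $n$.

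Starting from $m$ (which satisfies the hypothesis since $m \neq n$, as $n$ is red), we iterate the property to build an infinite sequence $m = m_0, m_1, m_2, \ldots$ of uncolored nodes, none equal to $n$, with $m_{i+1} \in \succs(m_i)$. This sequence is an infinite, hence maximal, path from $m$ that avoids $n$, contradicting the assumption. The main obstacle is precisely this backward direction: one must carefully extract from the operational invariant of Lemma~\ref{lem:visit} a graph-theoretic witness (an $n$-avoiding maximal path from $m$), which requires the propagation argument above together with the observation that the constructed path is automatically maximal since it is infinite.
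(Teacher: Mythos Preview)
Your proof is correct. For the backward direction, both you and the paper construct an $n$-avoiding maximal path from the uncolored node by repeatedly following an uncolored successor (whose existence is guaranteed by Lemma~\ref{lem:visit}); you carry along the extra invariant ``all maximal paths from $u_i$ contain $n$'' so as to force the path to be infinite, whereas the paper simply lets the path be finite or infinite and observes that either way it is maximal and consists of uncolored nodes---shorter, but the same idea. The forward direction is where you genuinely diverge. The paper argues by contradiction: take an $n$-free maximal path from a red node, observe via Lemma~\ref{lem:visit} that all its nodes are red, and then split into cases (finite path: the last node has no successor, so by Lemma~\ref{lem:visit} it must be $n$; infinite path: it contains a cycle, and the \emph{first}-colored node on that cycle could not have had all its successors already red). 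Your induction on coloring order is more direct and avoids this case split. Note that both arguments appeal to the temporal order in which nodes become red and not merely to the static characterization in Lemma~\ref{lem:visit}; your clause ``each having been colored red strictly earlier than $m$'' requires the same glance at the algorithm's execution as the paper's ``first colored node on the cycle''.
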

\begin{proof}
  (``$\Longleftarrow$'') We prove this implication by contraposition.
  % Assume that all maximal paths from $m$ contain $n$.
  Assume that $m$ is an uncolored node.  Lemma~\ref{lem:visit} implies
  that each uncolored node has an uncolored successor (if it has any).
  Hence, we can construct a maximal path from $m$ containing only
  uncolored nodes simply by always going to an uncolored successor,
  either up to infinity or up to the node with no successors. This
  uncolored maximal path cannot contain $n$ which is red.

  (``$\Longrightarrow$'') For the sake of contradiction, assume that
  there is a red node $m$ and a maximal path from $m$ that does not
  contain $n$. Lemma~\ref{lem:visit} implies that all nodes on this
  path are red. If the maximal path is finite, it has to end with a
  node without any successor. Lemma~\ref{lem:visit} says that such a
  node can be red if and only if it is $n$, which is a
  contradiction. If the maximal path is infinite, it must contain a
  cycle since the graph is finite.
  % We can inductively deduce that the predecessor of the node,
  % and its predecessor, up to $m$ are uncolored, which leads to
  % contradiction.  Therefore, the path contains a cycle.
  Let $r$ be the node on this cycle that has been colored red as the
  first one. Let $s$ be the successor of $r$ on the cycle.  Recall
  that $r\neq n$ as the maximal path does not contain $n$. Hence, node
  $r$ could be colored red only when all its successors including $s$
  were already red. This contradicts the fact that $r$ was colored red
  as the first node on the cycle.
\end{proof}

To determine the complexity of our algorithm on a CFG $(V,E)$, we
first analyze the complexity of one run of $\compute(n)$. The lines
12--15 iterate over all nodes. The crucial observation is that the
procedure $\visit$ is called at most once for each edge $(m,m')\in E$
of the graph: to decrease the counter of $m$ when $m'$ gets
red. Hence, the procedure $\compute(n)$ runs in $O(|V|+|E|)$. This
procedure is called on line 25 for each node $n$. Finally, lines
27--29 are executed for each pair of a node $n$ and a predicate node
$p$. This gives us the overall complexity
$O((|V|+|E|)\cdot|V| + |V|^2)=O((|V|+|E|)\cdot|V|)$. Since in control
flow graphs it holds $|E|\le2|V|$, the complexity can be simplified to
$O(|V|^2)$.

Note that our algorithm is in particular suitable for on-demand computation.
Indeed, each call to $\compute(n)$ calculates exactly the control dependencies
of $n$ and therefore may be used by a client analysis as per need.

%}}}
%{{{ DOD

\section{Decisive order dependence}
\label{sec:dod}

During program slicing, which is the main consumer of control
dependence, there are cases when even the use of NTSCD fails to
deliver correct results.  If the CFG of a program is
\emph{irreducible}\footnote{ An \emph{irreducible} graph is a graph
  that cannot be reduced to a single node by repeating operations: 1)
  remove self-loop 2) remove a node with a single predecessors and
  reconnect its incoming edge to successors \citep{Hecht74}. }, we may
want to mark more nodes as dependent than only those that are
non-termination sensitive control dependent.
%Consider the CFG in Figure~\ref{fig:irreducible_cfg}.

For irreducible CFGs, \citet{Ranganath07} introduced \emph{decisive
  order dependence (DOD)}. It is a relation between a predicate node
and two other nodes that captures the cases when the two nodes both
lie on all maximal paths from the predicate node, but the order in
which they are executed may change the computed values.  As an
example, see the CFG in Figure~\ref{fig:irreducible_cfg}.  Nodes $b$
and $c$ both lie on all maximal paths originating at node
$a$. Therefore, neither $b$ nor $c$ are NTSCD on $a$.  However, $a$
controls which of $b$ and $c$ is executed first.

\begin{figure}[t]
\begin{tikzpicture}[graph]
\node[draw, circle] (a) at (0, 0)   {$a$};
\node[draw, circle] (b) at (-1, -1) {$b$};
\node[draw, circle] (c) at ( 1, -1) {$c$};

%\draw[->] (0, 0.5) to (a);
\draw[->] (a) to (b);
\draw[->] (a) to (c);
\draw[->, bend left=15] (b) to (c);
\draw[->, bend left=15] (c) to (b);
\end{tikzpicture}
\caption{An example of an irreducible CFG. For such CFG, NTSCD may fail
         to deliver sensible results.}
\label{fig:irreducible_cfg}
\end{figure}
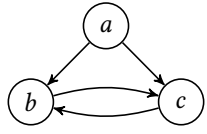

\begin{definition}[Decisive order dependence]\label{def:dod}
  Let $G=(V,E)$ be a CFG and $p,a,b\in V$ be three distinct nodes such
  that $p$ is a predicate node with successors $s_1$ and $s_2$. Nodes
  $a,b$ are \emph{decisive order-dependent (DOD)} on $p$, written
  $\dod{p}{a}{b}$, if
  \begin{itemize}
  \item all maximal paths from $p$ contain both $a$ and $b$,
  \item all maximal paths from $s_1$ contain $a$ before any occurrence
    of $b$, and
  \item all maximal paths from $s_2$ contain $b$ before any occurrence
    of $a$.
  \end{itemize}
\end{definition}

DOD is an auxiliary relation that has an effect only for irreducible
graphs. Indeed, \citet{Ranganath07} proved that DOD relation is empty
for reducible graphs.

\subsection{Algorithm of \citet{Ranganath07} for DOD}
\label{ssec:dod_ranganath}

Ranganath et al.~provide an algorithm that computes the DOD relation
for a given CFG $G=(V,E)$ in time $O(|V|^4\cdot|E|\cdot\log|V|)$ which
amounts to $O(|V|^5\cdot\log|V|)$ on CFGs~\cite[Fig.~7]{Ranganath07}.
We do not reproduce the whole algorithm here. We only mention one
unclear point.

% Unfortunately, the exposition of the algorithm is in some places
% unclear and misleading, as we explain in this section.
For each triple of nodes $p,a,b\in V$ such that $p \in \preno{G}$ and
$a \neq b$, the algorithm executes the following check and if it
succeeds, then $\dod{p}{a}{b}$ is reported.
\begin{equation}
\mathit{\textsc{reachable}(a,b,G)} \land
\mathit{\textsc{reachable}(b,a,G)} \land
\mathit{\textsc{dependence}(p,a,b,G)}
\label{eq:dod}
\end{equation}
The procedure $\mathit{\textsc{dependence}}(p,a,b,G)$ returns
\emph{true} iff $a$ is on all maximal paths from one successor of $p$
before any occurrence of $b$ and $b$ is on all maximal paths from the
other successor of $p$ before any occurrence of $a$. The procedure
$\mathit{\textsc{reachable}}$ is specified only by
words~\citep[description of Fig.~7]{Ranganath07} as follows:
\begin{quote}
$\mathit{\textsc{reachable}(a,b,G)}$ returns \emph{true} if $b$ is reachable
from $a$ in the graph $G$.
\end{quote}
% Here comes the possible flaw in the algorithm. If the procedure
% $\mathit{\textsc{reachable}}(m,n,G)$ would really check only that node
% $n$ is reachable from $m$, we must conclude that the algorithm is
% incorrect as
Unfortunately, this algorithm can provide incorrect results. For
example, consider the CFG in Figure~\ref{fig:dodcex}. For the triple
of nodes $p,a,b$, the formula~(\ref{eq:dod}) is clearly satisfied: $a$
appears on all maximal paths from one successor of $p$ (namely $a$)
before any occurrence of $b$, and $b$ appears on all maximal paths
from the other successor of $p$ (which is $b$) before any occurrence
of $a$. At the same time, $a$ and $b$ are mutually reachable.
However, it is not true that $\dod{p}{a}{b}$, because $a$ and $b$ do
not lie on all maximal paths from $p$ (the first condition of
Definition~\ref{def:dod} is violated).

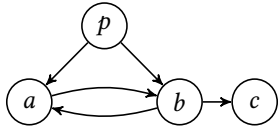
\begin{figure}[t]
\begin{tikzpicture}[graph]
\node[node] (a) at (0, 0)   {$p$};
\node[node] (b) at (-1, -1) {$a$};
\node[node] (c) at ( 1, -1) {$b$};
\node[node] (d) at ( 2, -1) {$c$};

%\draw[->] (0, 0.5) to (a);
\draw[->] (a) to (b);
\draw[->] (a) to (c);
\draw[->] (c) to (d);
\draw[->, bend left=15] (b) to (c);
\draw[->, bend left=15] (c) to (b);
\end{tikzpicture}
\caption{An example that shows incorrectness of the DOD algorithm by \citet{Ranganath07}}
\label{fig:dodcex}
\end{figure}

The algorithm can be easily fixed by modifying the procedure
$\mathit{\textsc{reachable}}(a,b,G)$ to return \emph{true} if $b$ is
on all maximal paths from $a$. This change does not increase the
overall complexity of the algorithm. Indeed, if we implement the
modified procedure $\mathit{\textsc{reachable}}(a,b,G)$ with use of
the procedure $\compute(b)$ in Algorithm~\ref{alg:algorithm}, it would
run in $O(|V|)$ time on a CFG $(V,E)$. This is exactly the complexity
of the original procedure $\mathit{\textsc{reachable}}(a,b,G)$.
By comparing the fixed and the original version of the
$\mathit{\textsc{reachable}}(a,b,G)$, one can readily confirm that the
original version produces supersets of DOD relations.

\subsection{New algorithm for DOD: crucial observations}

As in the case of NTSCD, we have designed a new algorithm for the
computation of DOD which is relatively simple and asymptotically
faster than the DOD algorithm of~\citet{Ranganath07}.
% Unfortunately, it is not so easy to see why our algorithm produces
% correct results. Therefore we decided to split the presentation into
% several parts. The first one provides some crucial observations.

Given a CFG, our algorithm first computes for each predicate $p$ the
set $V_p$ of nodes that are on all maximal paths from $p$. Then we
process every predicate $p$ separately. We build an auxiliary graph
$A_p$ with nodes $V_p$ and from this graph we get all pairs of nodes
that are DOD on $p$.

A graph $A_p$ is a central notion of our algorithm. This subsection
highlights some important properties of these graphs and proves some
statements that underpin the algorithm. The following definitions are
needed for the construction of $A_p$.

%Given a CFG, our algorithm for the computation of DOD
%first computes for each predicate $p$ the
%set $V_p$ of nodes that are on all maximal paths from $p$.
%We adapted our NTSCD algorithm to compute these sets.
%Once we have the set $V_p$ for a predicate $p$, the algorithm computes the
%DOD relation with the help of an auxiliary graph $A_p$ that is built
%from $V_p$ according to the following definitions.

\begin{definition}[$V'\!$-interval]
  Given a CFG $G=(V,E)$ and a subset $V' \subseteq V$, a path
  $n_1 \dots n_k$ such that $k \ge 2$, $n_1,n_k\in V'$, and
  $\forall 1 < i < k: n_i\not\in V'$ is called a
  \emph{$V'\!$-interval} from $n_1$ to $n_k$ in $G$.
 \end{definition}

In other words, a $V'\!$-interval is a finite path with at least one
edge that has the first and the last node in $V'$ but no other node on
the path is in $V'$.

\begin{definition}[Projection graph\footnote{\citet{Danicic11} call this
    graph a graph induced by V'.}]
  Given a CFG $G=(V,E)$ and a subset $V'\subseteq V$, the
  \emph{projection graph} for $V'$ is the graph $G'=(V',E')$, where
  \[
    E' = \{(x,y)\mid \text{there exists a $V'\!$-interval from $x$ to $y$ in $G$}\}.
  \]
\end{definition}
%\todo{Tohle bych nechal na popis algoritmu: Alternatively, projection
%  graph could be defined by removing nodes from $V\smallsetminus V'$
%  and reconnecting edges.}\marek{Muze byt, jen jsem to tu chtel nekde
%  mit poznamenane jako todo, abychom na to nezapomneli, kdyby se to
%  hodilo nekam napsat.} \todo{Je otazka, zda to vubec takhle
%  popisovat. I kdybys z kazdyho stavu z $V_p$ pustil BFS (se zarazkou
%  v ostatnich $V_p$ uzlech) na nalezeni nasledniku v $A_p$, tak to
%  vybudovani $A_p$ vyjde v $O(|V|^2)$. Je teda otazka, zda pitvat
%  nejaky dalsi algoritmus, ktery asi neni asymptoticky lepsi. Co
%  myslis?}\marek{Myslim, ze ted nic pitvat nechceme.
%  Kdybychom se tim chteli zabyvat do
%  budoucna, tak verim, ze by to slo udelat pomoci nejakeho upraveneho DFS
%  s prioritnim razenim uzlu a pamatovani si nejake informace o projitych uzlech,
%  ale ted bych to urcite neresil. Nase implementace to ted stavi presne tak,
%  jak jsi popsal: BFS (resp. DFS) z kazdeho uzlu.}

For a predicate $p$, we take the set $V_p$ of all nodes that lie on
all maximal paths from $p$ (including $p$). We compute the graph $A_p$
as the projection graph for $V_p$. Note that $A_p$ does not have to be
CFG any more as nodes in $A_p$ can have more than two
successors. However, $A_p$ represents exactly all possible orders of
the first occurrences of nodes in $V_p$ on maximal paths in $G$
starting from $p$.

\begin{lemma}\label{lem:gap}
  Let $G=(V,E)$ be a CFG and $p\in V$ be a predicate node. For each
  maximal path from $p$ in $G$, there exists a maximal path from $p$
  in $A_p$ with the same order of the first occurrences of all nodes
  from $V_p$, and vice versa.
\end{lemma}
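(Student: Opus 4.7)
The plan is to prove the two directions by a direct construction that converts a maximal path on one side into a matching maximal path on the other side. In both directions, the natural construction may yield a path that is not yet maximal, and the main task is to extend it without disturbing the order of first occurrences of $V_p$-nodes. The workhorse observation is that, by the very definition of $V_p$, every maximal path from $p$ in $G$ contains every node of $V_p$; hence any honest extension can at worst repeat nodes that already appeared.

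For the direction from $G$ to $A_p$, I take a maximal path $\pi$ from $p$ in $G$ and let $\pi'$ be the subsequence obtained by keeping exactly the occurrences of $V_p$-nodes of $\pi$, in order and with multiplicity. Between two consecutive entries of $\pi'$ lies, by construction, a $V_p$-interval in $G$, which is an edge of $A_p$; thus $\pi'$ is a path in $A_p$ starting at $p$, and its sequence of first occurrences equals that of $\pi$. If $\pi'$ is infinite it is already maximal. Otherwise I prolong it greedily in $A_p$ (following outgoing edges until either the graph is exhausted at a node with no successors, or the walk runs forever) to obtain a maximal $\pi''$. Because $\pi$ is maximal in $G$ it already visits every node of $V_p$, so $\pi'$ does as well; any further step in $A_p$ hence revisits a node already present and creates no new first occurrence.

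For the direction from $A_p$ to $G$, I take a maximal path $\pi'=q_1q_2\ldots$ in $A_p$ with $q_1=p$, and I replace every edge $(q_i,q_{i+1})$ by some $V_p$-interval from $q_i$ to $q_{i+1}$ in $G$ (which exists by the definition of $A_p$); concatenating these intervals yields a path $\pi$ in $G$ whose $V_p$-subsequence is precisely $\pi'$. If $\pi'$ is infinite, $\pi$ is infinite and therefore maximal. If $\pi'$ is finite and ends at $q_k$, then $q_k$ has no outgoing edge in $A_p$; I unpack this as the statement that no $V_p$-node is reachable from any successor of $q_k$ in $G$ (otherwise the first $V_p$-node on such a path would close a $V_p$-interval and give an outgoing edge of $A_p$). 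I then prolong $\pi$ by a greedy maximal continuation from $q_k$ in $G$: this continuation stays entirely outside $V_p$, so it adds no new first occurrence, and the resulting $\pi$ is maximal in $G$ with the required order of first occurrences.

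The step that requires the most care is controlling the two extensions so that they preserve the order of first occurrences. The two safeguards just sketched handle exactly this. In the $G\to A_p$ direction the safeguard is that the original $\pi$ already visits every $V_p$-node, so any extra steps in $A_p$ are repetitions. In the $A_p\to G$ direction the safeguard is the characterization of sinks of $A_p$: from their successors in $G$ no $V_p$-node is reachable, so the $G$-extension is $V_p$-free. Once these two facts are in place, the matching of first occurrences follows immediately from the way $\pi'$ and $\pi$ are built from one another.
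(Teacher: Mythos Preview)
Your proof is correct and follows essentially the same approach as the paper's: in both directions you project/expand along $V_p$-intervals and then extend to a maximal path, arguing that the extension cannot introduce new first occurrences because all $V_p$-nodes are already present. Your treatment of the finite case in the $A_p\to G$ direction is in fact more explicit than the paper's, as you spell out why a sink of $A_p$ has no $V_p$-node reachable from its $G$-successors, whereas the paper simply asserts that the constructed path ``already contains all nodes from $V_p$''.
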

\begin{proof}
  First, let us consider a maximal path from $p$ in $G$. The path has
  to contain all nodes from $V_p$. If there are infinitely many
  occurrences of such nodes in the path, we can easily construct a
  path in $A_p$ from $p$ which is identical to the path in $G$, only
  the nodes outside $V_p$ are omitted. Indeed, every $V_p$-interval
  $n_1\ldots n_k$ that is a part of the maximal path in $G$, can be
  simply replaced by $n_1n_k$ on the path in $A_p$ as $A_p$ contains
  the edge $(n_1,n_k)$. The situation is different when the maximal
  path in $G$ contains only finitely many occurrences of nodes from
  $V_p$. We can apply exactly the same construction on the prefix of
  this path ending by the last occurrence of a node from $V_p$. Again,
  the obtained path in $A_p$ is identical to the prefix of the path in
  $G$, only the nodes outside $V_p$ are omitted. Note that the
  obtained path in $A_p$ does not have to be maximal. However, there
  has to be a maximal path in $A_p$ with this prefix and the prefix
  has to contain all nodes from $V_p$. Hence, the order of the first
  occurrences of nodes from $V_p$ on this maximal path in $A_p$ is the
  same as on the original path in $G$.

  Second, let us consider a maximal path from $p$ in $A_p$. We can
  transform it into a path in $G$ by replacing every two successive
  nodes $m_im_{i+1}$ with a corresponding $V_p$-interval
  $m_i\ldots m_{i+1}$ in $G$. Such $V_p$-interval has to exist due to
  the construction of $A_p$. If the original maximal path in $A_p$ is
  infinite, we obtain a maximal path in $G$ that is identical to the
  original path, only some nodes outside $V_p$ are added. If the
  original maximal path in $A_p$ is finite, then the constructed path
  in $G$ does not have to be maximal, but it already contains all
  nodes from $V_p$ in the same order as the original path in $A_p$ and
  it can be prolonged into a maximal path.
\end{proof}

The definition of $V_p$ and the above lemma imply that each maximal
path from $p$ in $A_p$ contains all nodes in $A_p$.

We now turn our attention to the computation of DOD relation from $A_p$.
The following lemma solves the easy case when $p$ has zero or one
successor in $A_p$.

\begin{lemma}\label{lem:noDOD}
  If $p$ has at most one successor in $A_p$, then there are no nodes
  DOD on $p$.
\end{lemma}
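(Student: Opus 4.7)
Plan: I will argue by contradiction, supposing $\dod{p}{a}{b}$ holds for some distinct $a, b$. The definition of DOD together with the definition of $V_p$ forces $a, b \in V_p \setminus \{p\}$. Let $s_1, s_2$ denote the two successors of $p$ in $G$.

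First, I dispose of the degenerate subcases. If $p$ has no successor in $A_p$, then by the remark following Lemma~\ref{lem:gap} every maximal path from $p$ in $A_p$ contains all of $V_p$; since the only such path is the one-node path $p$, this gives $V_p = \{p\}$, contradicting the existence of $a$ and $b$. The same argument rules out the case where the unique successor of $p$ in $A_p$ is $p$ itself, since then the only maximal path from $p$ in $A_p$ is $p p p \ldots$ and again $V_p = \{p\}$.

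Thus I may assume $p$ has exactly one successor $s$ in $A_p$ with $s \neq p$. Every maximal path from $p$ in $A_p$ begins with the edge $(p, s)$, so by Lemma~\ref{lem:gap} the first node of $V_p \setminus \{p\}$ encountered on any maximal path from $p$ in $G$ is $s$. Since $a, b \in V_p \setminus \{p\}$, neither can occur on any maximal path starting with $p s_1$ or $p s_2$ before $s$ does. Now I split on $s$. If $s = a$, then every maximal path from $s_2$ reaches $a$ before $b$, contradicting the third DOD condition; the case $s = b$ is symmetric with respect to $s_1$. If $s \notin \{a, b\}$, I fix any maximal path $\pi$ from $s$ in $G$, prepend to it a $V_p$-interval from $p$ to $s$ through $s_1$ (deleting the initial $p$) and, separately, one through $s_2$, obtaining two maximal paths, one from $s_1$ and one from $s_2$, whose suffixes from $s$ onward both coincide with $\pi$. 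Since $a$ and $b$ appear strictly after $s$ on both, they occur in the same order on the two paths, contradicting the DOD requirement that the orders on paths from $s_1$ and from $s_2$ be opposite.

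The one subtle point is justifying cleanly that the first occurrence of a $V_p$-node after $p$ on any maximal path in $G$ is forced to be $s$; this is where Lemma~\ref{lem:gap} does the real work, translating the uniqueness of the $A_p$-successor of $p$ into a statement about orderings in $G$. Once that is established, the three subcases for $s$ are short and formulaic.
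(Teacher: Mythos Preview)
Your proof is correct and takes essentially the same approach as the paper: use Lemma~\ref{lem:gap} to conclude that the unique $A_p$-successor $s$ (the paper calls it $n$) is the first node of $V_p\smallsetminus\{p\}$ reached from either $s_1$ or $s_2$, so continuations from $s$ are the same regardless of which branch was taken, defeating DOD. The paper states this last step uniformly (``the paths can continue from $n$ regardless whether they start in $s_1$ or $s_2$''), whereas you split into the subcases $s=a$, $s=b$, $s\notin\{a,b\}$ and in the third subcase explicitly construct two maximal paths sharing a common suffix; this case analysis is not needed but does no harm, and you are slightly more careful than the paper in explicitly disposing of the possibility that the unique $A_p$-successor of $p$ is $p$ itself.
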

\begin{proof}
  If $p$ has no successor in $A_p$, then $p$ is the only node
  contained by all maximal paths from $p$ in
  $G$. Definition~\ref{def:dod} immediately implies that there are no
  nodes DOD on $p$.

  Now assume that $p$ has a single successor $n$ in $A_p$. Let $s_1$
  and $s_2$ be the two different successors of $p$ in
  $G$. Lemma~\ref{lem:gap} implies that on all maximal paths in $G$
  from $s_1$ or $s_2$, the node $n$ appears as the first node of
  $V_p$. The paths can continue from $n$ regardless whether they start
  in $s_1$ or $s_2$. This means that if the first occurrences of nodes
  $c,d\in V_p$ appear in some order on a maximal path in $G$ from
  $s_1$, they can appear in the same order also on a maximal path from
  $s_2$. Hence, $c,d$ are not DOD on $p$.
\end{proof}

Now we study the case when $p$ has at least two different successors
in $A_p$. Actually, we start with two general lemmas not specific to
this case.

\begin{lemma}\label{lem:oneSCC}
  The graph $A_p$ has at most one nontrivial SCC. Moreover, this SCC
  is terminal.
\end{lemma}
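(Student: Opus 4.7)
The plan is to combine two ingredients already established in the text: by the observation stated just after Lemma~\ref{lem:gap}, every maximal path from $p$ in $A_p$ visits every node of $A_p$ (equivalently, every node of $V_p$); and in any finite directed graph a path that stays inside a nontrivial SCC forever is infinite and hence maximal.

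First I would prove that every nontrivial SCC of $A_p$ is terminal. Suppose for contradiction that $S$ is a nontrivial SCC of $A_p$ that is not terminal. Then there exist $s\in S$ and $t\notin S$ with an edge $(s,t)$ in $A_p$. Since $S$ is an SCC, no path in $A_p$ from $t$ can return to any node of $S$ (otherwise $t$ would itself lie in $S$). I would then construct a path $\pi_1$ from $p$ to $s$ in $A_p$ that avoids $t$: if every path from $p$ to $s$ passed through $t$, then from $t$ one could continue on to $s$, contradicting the fact that $t$ cannot reach back into $S$; the case $p\in S$ is immediate, as one can then reach $s$ while staying inside $S$. Extending $\pi_1$ by a path that cycles inside $S$ forever---possible because $S$ is a nontrivial SCC, so $s$ lies on a cycle within $S$---yields an infinite, hence maximal, path from $p$ whose nodes are contained in the union of the nodes of $\pi_1$ and $S$. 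This path misses $t$, contradicting the first ingredient, since $t\in V_p$.

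Second, I would show that $A_p$ contains at most one nontrivial SCC. If there were two distinct nontrivial SCCs $S_1$ and $S_2$, both would be terminal by the previous step, so once a path enters either of them it can never leave. But any maximal path from $p$ must visit nodes of both $S_1$ and $S_2$ by the first ingredient, and such a path can visit nodes of at most one terminal SCC, a contradiction.

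The main obstacle I anticipate is the construction of the $t$-avoiding path $\pi_1$: the argument hinges on observing that no path in $A_p$ leads from $t$ back into $S$, plus some care when $p$ itself happens to lie in $S$. Once $\pi_1$ is in hand, the remaining steps are short applications of the two ingredients and the definition of terminal SCC.
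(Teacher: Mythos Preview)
Your proof is correct. Both your argument and the paper's hinge on the same key observation: a path from $p$ that eventually cycles forever inside a nontrivial SCC is maximal and must therefore visit every node of $A_p$. The difference lies in organization. The paper handles both claims with a single construction: it follows a path from $p$ until it enters its \emph{first} nontrivial SCC $S$ and then lets it cycle there forever; since this maximal path already contains every node of $A_p$, any second nontrivial SCC and any edge leaving $S$ are ruled out in one stroke. You instead separate the two claims, first proving that every nontrivial SCC is terminal via your $t$-avoiding path $\pi_1$, and only then deducing uniqueness from terminality. Your decomposition is slightly longer because the $t$-avoidance step (arguing that $t$ cannot reach back into $S$, hence some $p$-to-$s$ path must miss $t$) is extra work that the paper sidesteps simply by choosing the first-entered nontrivial SCC. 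On the other hand, your version makes the terminality conclusion more explicit than the paper's rather terse ``Hence \ldots{} it is terminal.''
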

\begin{proof}
  Recall that each maximal path from $p$ in $A_p$ has to visit all
  nodes. In particular, it has to pass through each SCC. However, when
  a path from $p$ enters its first nontrivial SCC, it can be prolonged
  into an infinite path that never leaves this SCC. Such a path is
  maximal and thus it has to contain all nodes of $A_p$. Hence, there
  is at most one nontrivial SCC in $A_p$ and it is terminal.
\end{proof}

\begin{lemma}\label{lem:sucSCC}
  All successors of $p$ in $A_p$ are in the same SCC.
\end{lemma}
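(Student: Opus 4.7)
The plan is to exploit the observation stated just before the lemma, namely that each maximal path from $p$ in $A_p$ contains every node of $A_p$. If $p$ has at most one successor in $A_p$, the claim is vacuous, so I focus on the case where $p$ has two or more successors in $A_p$ and show that any two such successors are mutually reachable in $A_p$.

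Fix two (not necessarily distinct) successors $s_i$ and $s_j$ of $p$ in $A_p$. To show $s_j$ is reachable from $s_i$, I form a maximal path in $A_p$ whose first two nodes are $p$ and $s_i$: since $(p,s_i)$ is an edge of $A_p$ and every finite path in a finite graph can be extended either into an infinite path or into a path ending at a node with no outgoing edges, such a maximal path exists. By the observation above, this maximal path must contain $s_j$, and since $p$ itself need not recur, the segment of the path from $s_i$ onward witnesses reachability of $s_j$ from $s_i$ in $A_p$. By the symmetric argument starting the maximal path with $p,s_j$, one obtains reachability of $s_i$ from $s_j$. Hence $s_i \mreach s_j$, so they lie in the same SCC.

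Since $s_i$ and $s_j$ were arbitrary successors of $p$ in $A_p$, all successors of $p$ belong to a common SCC. There is no real obstacle here; the only thing to be mindful of is to justify the existence of a maximal path from $p$ beginning with any prescribed outgoing edge, which is immediate from finiteness of $A_p$, together with invoking the already-established fact that every such maximal path visits all nodes of $A_p$.
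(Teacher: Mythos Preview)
Your proof is correct and follows essentially the same approach as the paper's: both rely on the already-established fact that every maximal path from $p$ in $A_p$ contains all nodes of $A_p$, and conclude that any two successors of $p$ are mutually reachable. Your version is simply more explicit about constructing a maximal path through a prescribed successor, whereas the paper compresses this into a single sentence.
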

\begin{proof}
  Let $n_1,n_2$ be two successors of $p$ in $A_p$. As all maximal
  paths from $p$ in $A_p$ contain all nodes, $n_2$ has to be reachable
  from $n_1$.  For the same reason, $n_1$ is reachable from
  $n_2$. Hence $n_1$ and $n_2$ are mutually reachable and thus in the
  same SCC.
\end{proof}

If $p$ has at least two successors in $A_p$, all these successors are
in the same SCC, which is therefore nontrivial. Due to
Lemma~\ref{lem:oneSCC}, this SCC is terminal. This also implies that
there is no other SCC besides the nontrivial one and potentially the
trivial SCC formed by $p$. The structure of such graph can be actually
characterized even more precisely.

\begin{lemma}\label{lem:Ap}
  If $p$ has at least two successors in $A_p$, then $p$ is a trivial SCC
  and all other nodes of $A_p$ are in a single SCC, which is a cycle.
\end{lemma}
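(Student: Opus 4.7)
My plan is to proceed in three steps. First, by Lemma~\ref{lem:sucSCC} applied to the two distinct successors $s_1,s_2$ of $p$ in $A_p$, these lie in the same SCC; because $s_1\neq s_2$, this SCC is nontrivial and hence, by Lemma~\ref{lem:oneSCC}, it is the unique nontrivial SCC $C$, which is terminal. Moreover, every non-$p$ node of $A_p$ belongs to $C$: by Lemma~\ref{lem:gap} every maximal path from $p$ in $A_p$ visits all of $V_p$, but after its first step such a path has entered $C$ (via some successor of $p$) and cannot leave $C$ by terminality, so the only node that can lie outside $C$ is $p$ itself.

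The core tool for what follows is the following observation, which I would prove using Lemma~\ref{lem:gap}: any simple cycle $\gamma$ in $A_p$ through $p$ spans $V_p$, because traversing $\gamma$ repeatedly from $p$ gives a maximal path from $p$ whose node set, by Lemma~\ref{lem:gap}, must equal $V_p$. Analogously, any simple cycle in $C$ through $s_1$ or $s_2$ must span $V_p\setminus\{p\}$ — prefix the edge $p\to s_i$ to obtain a maximal path from $p$ and again apply Lemma~\ref{lem:gap}.

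Using this observation I would rule out $p\in C$. If $p$ were in $C$, then $V_p=C$ and, by strong connectivity, $p$ would lie on some simple cycle $\gamma$ of $C$; the observation forces $\gamma$ to be Hamiltonian in $C$. But $\gamma$ uses only one of $p$'s at-least-two out-edges, say $p\to s_1$; combining the unused edge $p\to s_2$ with the $\gamma$-arc from $s_2$ back to $p$ yields a strictly shorter simple cycle through $p$, contradicting Hamiltonicity. A self-loop at $p$ is also impossible, since it would make $\{p\}$ a second nontrivial SCC, contradicting Lemma~\ref{lem:oneSCC}. Hence $\{p\}$ is a trivial SCC and $V_p=C\cup\{p\}$.

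Finally, to show $C$ is a cycle, fix a Hamiltonian cycle $\gamma_1$ of $C$ through $s_1$ (which exists by the observation applied to any simple cycle through $s_1$), and assume for contradiction that $C$ has an edge $(u,v)$ not on $\gamma_1$. The chord cycle $C_1$, formed by $(u,v)$ together with the $\gamma_1$-arc from $v$ forward back to $u$ (or just the self-loop when $u=v$), is a simple cycle in $C$ that omits the $\gamma_1$-nodes strictly between $u$ and $v$. Since $C_1$ is not Hamiltonian, neither $s_1$ nor $s_2$ can lie on it, so both lie among the omitted nodes. The infinite walk from $s_1$ that follows $\gamma_1$ forward to $u$ and then loops around $C_1$ forever visits exactly the $\gamma_1$-arc from $s_1$ forward to $u$; for this arc to cover all of $C$, node $s_1$ must be the immediate $\gamma_1$-successor of $u$. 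The same argument forces $s_2$ to be that very same successor, giving $s_1=s_2$, a contradiction. The most delicate point is the chord argument, in particular the careful bookkeeping of exactly which nodes the forward-then-loop walk visits; the rest is routine given the earlier lemmas and the observation.
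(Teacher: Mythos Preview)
Your argument is correct and takes a cleaner, more conceptual route than the paper. The paper proceeds constructively: it picks two successors $m_1,n_1$ of $p$, takes simple paths $m_1\ldots m_kn_1$ and $n_1\ldots n_lm_1$, shows all of $p,m_1,\ldots,m_k,n_1,\ldots,n_l$ are pairwise distinct by exhibiting bad infinite paths for each possible coincidence, concludes these are exactly the nodes of $A_p$, rules out edges back into $p$, and finally rules out extra edges inside the cycle by a three-way case split on the chord's endpoints. You instead isolate one structural observation---every simple cycle through $p$ (resp.\ through a successor $s_i$, once $p\notin C$ is known) is Hamiltonian---and reuse it three times: to exclude $p\in C$, to obtain a Hamiltonian cycle $\gamma_1$ in $C$, and to force any chord cycle $C_1$ to miss both $s_1$ and $s_2$, after which a single forward-walk argument pins both $s_1$ and $s_2$ to the immediate $\gamma_1$-successor of $u$. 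This buys uniformity and avoids the paper's case analysis, at the cost of the slightly more delicate bookkeeping in the chord step that you already flag. Two minor expository points: your ``spans $V_p\setminus\{p\}$'' clause for cycles through $s_i$ is only literally correct once $p\notin C$ is established, so it reads better stated after that step; and in the $p\in C$ argument you should dispatch a possible self-loop at $p$ \emph{first} (it is itself a non-Hamiltonian simple cycle through $p$ when $|C|\ge 2$), so that the shortening step can freely assume the unused out-edge goes to some $s_2\neq p$. Neither point affects correctness.
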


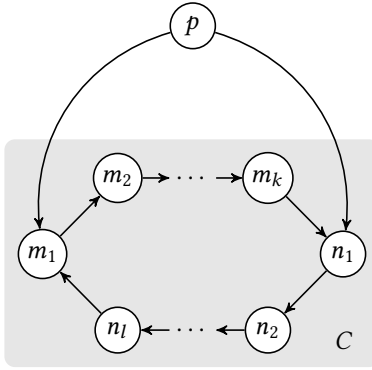
\begin{figure}[t]
  \centering
  \begin{tikzpicture}[graph]
    \node[node] (p) at (0,4) {$p$};
    \node[node] (m1) at (-2,1) {$m_1$};
    \node[node] (m2) at (-1,2) {$m_2$};
    \node (m3) at (0,2) {$\dots$};
    \node[node] (mk) at (1,2) {$m_k$};
    \node[node] (n1) at (2,1) {$n_1$};
    \node[node] (n2) at (1,0) {$n_2$};
    \node (n3) at (0,0) {$\dots$};
    \node[node] (nl) at (-1,0) {$n_l$};
    \draw[->,bend right=40] (p) to (m1);
    \draw[->,bend left=40] (p) to (n1);
    \draw[->] (m1) to (m2);
    \draw[->] (m2) to (m3);
    \draw[->] (m3) to (mk);
    \draw[->] (mk) to (n1);
    \draw[->] (n1) to (n2);
    \draw[->] (n2) to (n3);
    \draw[->] (n3) to (nl);
    \draw[->] (nl) to (m1);
    \begin{scope}[on background layer]
      \node[fill=black!10,rounded corners, inner sep=5,fit=(m1) (m2) (n1) (n2)] (rec) {};
    \end{scope}
%    \tikzstyle{sccname} = [red,anchor=south east,outer xsep=13pt]
    \node[anchor=south east, outer sep=5pt] at (rec.south east) {$C$};
  \end{tikzpicture}
  \caption{The structure of $A_p$ where $p$ has two successors}
  \label{fig:Ap}
\end{figure}

\begin{proof}
  Let $m_1$ and $n_1$ be two distinct successors of $p$ and let $C$ be
  the nontrivial SCC containing all successors of $p$. Because $C$
  contains all successors of $p$ and it is terminal, it has to contain
  all nodes of $A_p$, potentially except $p$.
  Note that $p$ is not a successor of $p$ because a
  self-loop on $p$ would generate a maximal path that does not contain
  the other successors of $p$ in $A_p$. Hence, if $p$ is not in $C$,
  it is a trivial SCC.

  As $m_1$ and $n_1$ are in one SCC, there has to be a path
  $m_1m_2\ldots m_kn_1$ from $m_1$ to $n_1$ where all nodes on the
  path are distinct.
  Analogously, there has to be a path
  $n_1n_2\ldots n_lm_1$ from $n_1$ to $m_1$ where all nodes are
  distinct. To prove that all nodes
  $m_1,m_2,\ldots,m_k,n_1,n_2,\ldots,n_l$ are distinct, it remains to
  show that $\{m_2,\ldots,m_k\}\cap\{n_2,\ldots,n_l\}=\emptyset$. For
  the sake of contradiction, assume that $m_i=n_j$ for some
  $2\le i\le k$ and $2\le j\le l$. Then there exists the infinite path
  $p(m_1m_2\ldots m_{i-1}n_jn_{j+1}\ldots n_l)^\omega$ that does not
  contain $n_1$, which is a contradiction.

  Further, it holds that
  $p\not\in\{m_1,\ldots,m_k\}\cup\{n_1,\ldots,n_l\}$. For the sake of
  contradiction, assume that $p=m_i$ (the case for $p=n_i$ is
  analogous). Then there exists the infinite path
  $p(m_1m_2\ldots m_i)^\omega$ that does not contain $n_1$, which is a
  contradiction.

  To sum up, we know that all nodes
  $p,m_1,m_2,\ldots,m_k,n_1,n_2,\ldots,n_l$ are distinct and we know
  that they are connected by edges as depicted in Figure~\ref{fig:Ap}.
  We can easily see that $A_p$ contains an infinite path
  $p(m_1m_2\ldots m_kn_1n_2\ldots n_l)^\omega$ that has to contain all
  nodes in $A_p$.

  Now we show that $p$ is not in the SCC $C$.
  For the sake of contradiction,
  assume that there is an edge $(m_i,p)$ (the case for an edge
  $(n_i,p)$ is analogous). Then there exists the infinite path
  $(pm_1m_2\ldots m_i)^\omega$ that does not contain $n_1$, which is a
  contradiction. As $p$ is not in $C$, it forms a trivial SCC.

  It remains to prove that $C$ is a cycle. In other words, we need to
  show that $C$ does not contain any other edges except those depicted
  in Figure~\ref{fig:Ap}. Intuitively, any additional edge would
  induce a maximal path from $p$ that misses some nodes of $A_p$. For
  the sake of contradiction, assume that there is an additional edge
  leading from $m_i$ (the case of an edge leading from $n_i$ is
  analogous). The edge has to match one of the following three cases.
  \begin{enumerate}
  \item $(m_i,m_j)$ where $j\le i$. Then there exists the infinite
    path $pm_1\ldots m_{j-1}(m_jm_{j+1}\ldots m_i)^\omega$ that does
    not contain $n_1$.
  \item $(m_i,m_j)$ where $j>i+1$. Then there exists the infinite path
    $p(m_1\ldots m_im_jm_{j+1}\ldots m_kn_1\ldots n_l)^\omega$ that
    does not contain $m_{i+1}$.
  \item $(m_i,n_j)$, where $i<k$ or $j>1$. Then there exists the
    infinite path $p(m_1\ldots m_in_jn_{j+1}\ldots n_l)^\omega$ that
    does not contain $m_k$ or $n_1$.
  \end{enumerate}
  In all cases, we get a contradiction.
\end{proof}

Note that Figure~\ref{fig:Ap} depicts the shape of $A_p$ where $p$ has
just 2 successors. However, $p$ may have more successors leading to
the nodes in $C$.

% Based on the previous lemmas, the only structure of $G'$ that is of interest
% for the computation of DOD is that there is the node $v$ which is a predicate
% in $G$ and a cycle such that there are at least two edges from $v$ to the cycle:

%    \begin{center}
%    \begin{tikzpicture}
%    \node[draw, circle] (v) at (0, 0) {v};
%    \node[draw, circle] (s1) at (1.5, 1) {$s_1$};
%    \node[draw, circle] (s2) at ( 3, 1) {$s_2$};
%    \node[draw, circle] (s3) at ( 3, -1) {$s_3$};
%    \node[draw, circle] (s4) at (1.5, -1) {$s_4$};
%    \node[] (dots2) at (1.15, 0) {$\dots$};
%    \draw[->] (v) to (dots2);
%    \draw[->] (v) to (s1);
%    \draw[->] (v) to (s4);

%    \draw[->, densely dotted, bend left] (s1) to (s2);
%    \draw[->, densely dotted, bend left] (s2) to (s3);
%    \draw[->, densely dotted, bend left] (s3) to (s4);
%    \draw[->, densely dotted] (s4) to (dots2);
%    \draw[->, densely dotted] (dots2) to (s1);
%    \end{tikzpicture}
%    \end{center}

We now investigate how to compute the pairs of nodes $a,b$ that are
DOD on $p$ from the graph $A_p$. Lemma~\ref{lem:noDOD} says that we
can safely ignore graphs $A_p$ where $p$ has at most one successor. In
the following, we study only graphs $A_p$ where $p$ has at least two
successors. Lemma~\ref{lem:Ap} shows that these graphs have the shape
of a cycle and node $p$ with at least two outgoing edges going to some
nodes on the cycle.

The definition of DOD uses maximal paths from successors of $p$ in
$G$. Let $s_1$ and $s_2$ be the two successors of $p$ in $G$.
% The names of the successors stand for \emph{blue} and \emph{red},
% which are the colors we will use to depict some nodes of paths
% related to these successors.
Note that $s_1,s_2$ can, but do not have to be in $A_p$.
% \marek{Nechceme treba do Fig. 2 pridat priklad grafu, kde nejaky
% naslednik $p$ neni v $A_p$? Ci rovnou pridat figure, kde je G a jemu
% odpovidajici $A_p$ pro nejaky $p$ + obarveni nasledniku (treba i
% prekryvajici se)}
For each successor $s_i$ we compute the set of
nodes that appear as the first node of $V_p$ on some maximal path from
$s_i$ in $G$. Formally, for $i\in\{1,2\}$, we define
\[
  V_i=\{n\in V_p\mid \text{there exists a path
    $s_i\ldots n\in (V\smallsetminus V_p)^*.V_p$ in $G$}\}.
\]
Note that the nodes in $V_1\cup V_2$ are exactly all the successors of
$p$ in $A_p$. The motivation for defining sets $V_i$ is that the
maximal paths from the nodes of $V_1$ in $A_p$ represent exactly all
possible orders of the first occurrences on nodes in $V_p$ on maximal
paths in $G$ starting in $s_i$. The relation is essentially the same
as the one for maximal paths from $p$ in $A_p$ and in $G$, as
described by Lemma~\ref{lem:gap}.

\begin{lemma}\label{lem:gap2}
  Let $G$ be a CFG, $p$ be a predicate node with successors
  $s_1,s_2$ in $G$ and with at least two successors in $A_p$, and let
  $i\in\{1,2\}$. For each maximal path from $s_i$ in $G$, there exists
  a maximal path from a node of $V_i$ in $A_p$ with the same order of
  the first occurrences of all nodes from $V_p\smallsetminus\{p\}$,
  and vice versa.
\end{lemma}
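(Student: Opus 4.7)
The plan is to adapt the proof of Lemma~\ref{lem:gap} by shifting the starting point from $p$ to $s_i$. Two preliminary observations are needed. First, any maximal path from $s_i$ in $G$ becomes, after prepending $p$, a maximal path from $p$, and hence must contain every node of $V_p$; in particular it contains at least one node of $V_p\smallsetminus\{p\}$, since $|V_p|\ge 3$ by Lemma~\ref{lem:Ap} (the cycle $C$ contains two distinct successors of $p$). Second, in our setting $p$ has no incoming edges in $A_p$ (an incoming edge from a cycle node would force $p$ into the cycle, contradicting Lemma~\ref{lem:Ap}), and $p\notin V_1\cup V_2$, since otherwise $p$ would be a successor of itself in $A_p$, which the proof of Lemma~\ref{lem:Ap} explicitly excludes. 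Combining these, no path from $s_i$ in $G$ whose intermediate nodes avoid $V_p$ can terminate at $p$, so the first $V_p$-node on every maximal path from $s_i$ lies in $V_p\smallsetminus\{p\}$ and belongs to $V_i$.

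For the forward implication, let $\pi$ be a maximal path from $s_i$ in $G$ and let $n$ denote the first occurrence of a $V_p$-node on $\pi$; by the above, $n\in V_i$. Apply the contraction construction from Lemma~\ref{lem:gap} to the suffix of $\pi$ starting at $n$: replace each $V_p$-interval by the corresponding edge of $A_p$. If $\pi$ visits $V_p$ infinitely often, the result is an infinite (hence maximal) path from $n$ in $A_p$ preserving the order of first occurrences of $V_p\smallsetminus\{p\}$-nodes. Otherwise, we obtain a finite path in $A_p$ that already visits every node of $V_p\smallsetminus\{p\}$ (as $\pi$ does), and by Lemma~\ref{lem:Ap} this path can be prolonged along the cycle to a maximal path in $A_p$ without creating any new first occurrences.

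For the reverse implication, let $\sigma$ be a maximal path from some $n\in V_i$ in $A_p$, and let $\tau$ be a path $s_i\ldots n$ in $G$ whose intermediate nodes avoid $V_p$ (such $\tau$ exists by the definition of $V_i$). Expand each edge $(m,m')$ of $\sigma$ into a corresponding $V_p$-interval in $G$, concatenate with $\tau$, and prolong to a maximal path in $G$ if necessary. Since $\sigma$ either is infinite or is finite and maximal in $A_p$ while starting in the cycle, it traverses every node of the cycle, so the resulting path in $G$ visits all nodes of $V_p\smallsetminus\{p\}$ in the same order as $\sigma$. The main obstacle is establishing the two preliminary facts about $p$ cleanly, especially that no path from $s_i$ with intermediate nodes outside $V_p$ can reach $p$; once these are in place, the remainder is a direct adaptation of the construction used in the proof of Lemma~\ref{lem:gap}.
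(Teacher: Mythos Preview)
Your proof is correct and rests on the same key observation as the paper's: once one knows that $p$ forms a trivial SCC in $A_p$ (hence has no incoming $A_p$-edges and cannot be revisited), the correspondence established in Lemma~\ref{lem:gap} transfers to paths starting at $s_i$. The difference is only in packaging. The paper obtains the result by deleting the edge $(p,s_2)$ from $G$ and then invoking Lemma~\ref{lem:gap} as a black box: in the modified graph, maximal paths from $p$ are exactly $p$ prepended to maximal paths from $s_1$, and likewise the maximal paths from $p$ in the correspondingly modified $A_p$ are $p$ prepended to maximal paths from a node of $V_1$. You instead rerun the contraction/expansion construction of Lemma~\ref{lem:gap} directly on the suffix starting at the first $V_p$-node. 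The paper's reduction is terser; your version is more explicit and sidesteps having to argue that the edge removal leaves $V_p$ and the cycle part of $A_p$ intact. One minor remark: in your reverse direction, a maximal path $\sigma$ in $A_p$ starting on the cycle is necessarily infinite (every cycle node has a successor and $p$ has no incoming edge), so the ``finite and maximal'' case you allow for never actually occurs.
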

\begin{proof}
  Due to Lemma~\ref{lem:Ap}, the assumption that $p$ has at least two
  successors in $A_p$ means that $p$ is a trivial SCC in $A_p$. The
  construction of $A_p$ then implies that $p$ is also a trivial SCC in
  $G$. All maximal paths from $p$ in $G$ as well as in $A_p$ thus
  contain $p$ only as the starting node.
  
  Without loss of generality, assume that $i=1$. To prove the
  statement, we remove the edge $(p,s_2)$ from $G$. The statement is
  now a direct corollary of Lemma~\ref{lem:gap}, and the fact that
  maximal paths from $s_1$ in $G$ are exactly the maximal paths from
  $p$ in $G$ just without the first node, and maximal paths from nodes
  of $V_1$ in $A_p$ are exactly the maximal paths from $p$ in $A_p$
  just without the first node.
\end{proof}

Note that the sets $V_1$ and $V_2$ do not have to be disjoint. The
following lemma closes the case when they are not disjoint.

\begin{lemma}\label{lem:v1v2}
  Let $G$ be a CFG, $p$ be a predicate node with successors
  $s_1,s_2$ in $G$ and with at least two successors in $A_p$.
  If $V_1\cap V_2\neq\emptyset$, then there are no nodes DOD on $p$.
\end{lemma}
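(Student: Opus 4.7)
The plan is to exploit Lemma~\ref{lem:Ap} together with Lemma~\ref{lem:gap2} to show that a node in $V_1\cap V_2$ induces a common ordering of first occurrences from both successors of $p$, ruling out DOD.

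First, I would pick any $n\in V_1\cap V_2$. Since $V_1\cup V_2$ equals the set of successors of $p$ in $A_p$, $n$ is a successor of $p$ in $A_p$, hence it lies on the cycle $C$ described by Lemma~\ref{lem:Ap}. Because $C$ is a (simple) cycle, every node in $C$ has exactly one successor inside $C$, so there is a unique maximal path $\pi$ in $A_p$ starting at $n$; it is obtained by traversing $C$ repeatedly from $n$ and it visits every node of $V_p\smallsetminus\{p\}$ (infinitely often) in one fixed cyclic order.

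Next, I would use Lemma~\ref{lem:gap2} twice. Since $n\in V_1$, the path $\pi$ in $A_p$ corresponds to some maximal path $\pi_1$ from $s_1$ in $G$ having exactly the same order of first occurrences of nodes from $V_p\smallsetminus\{p\}$ as $\pi$. Since $n\in V_2$, the same lemma yields a maximal path $\pi_2$ from $s_2$ in $G$ with exactly the same first-occurrence order as $\pi$. Thus $\pi_1$ and $\pi_2$ witness identical orderings of first occurrences of all nodes of $V_p\smallsetminus\{p\}$.

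Finally, I would derive a contradiction with the DOD conditions. Suppose for contradiction that $\dod{p}{a}{b}$ holds for some $a,b$. By Definition~\ref{def:dod}, all maximal paths from $s_1$ contain $a$ strictly before any occurrence of $b$, in particular $\pi_1$ does; hence in the common ordering, $a$'s first occurrence precedes $b$'s. Symmetrically, $\pi_2$ forces $b$'s first occurrence to precede $a$'s, which is impossible. Therefore no pair of nodes is DOD on $p$. The only subtlety, which is really just bookkeeping, is to confirm that $a,b\in V_p\smallsetminus\{p\}$ (so that Lemma~\ref{lem:gap2} applies to them): this follows because $p$ itself cannot be one of $a,b$ (the three nodes in Definition~\ref{def:dod} are distinct, and $a,b$ must lie on all maximal paths from $p$, hence in $V_p$).
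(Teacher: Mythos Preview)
Your proof is correct and follows essentially the same approach as the paper's: pick $n\in V_1\cap V_2$, apply Lemma~\ref{lem:gap2} for each $s_i$ to obtain maximal paths from $s_1$ and $s_2$ in $G$ sharing the same first-occurrence order of nodes in $V_p\smallsetminus\{p\}$, and conclude that the DOD conditions cannot be met. The paper's version is terser (it does not invoke Lemma~\ref{lem:Ap} explicitly, nor spell out the uniqueness of the maximal path from $n$ or the bookkeeping that $a,b\in V_p\smallsetminus\{p\}$), but the argument is the same.
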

\begin{proof}
  Let $n$ be a node in $V_1\cap V_2$. Lemma~\ref{lem:gap2} implies
  that for every maximal path in $A_p$ from $n$, there exist a maximal
  path in $G$ from $s_1$ and from $s_2$ with the same order of the first
  occurrences of all nodes from $V_p\smallsetminus\{p\}$. This
  directly implies that there are no nodes DOD on $p$.
\end{proof}

The nodes in $V_i$ divide the cycle $C$ of $A_p$ into
\emph{$s_i$-strips}, which are parts of the cycle starting with a node
from $V_i$ and ending just before the next node of $V_i$.

\begin{definition}[$s_i$-strip] Let $G$ be a CFG, $p$ be a predicate
  node with successors $s_1,s_2$ in $G$ and with at least two
  successors in $A_p$, and let $i\in\{1,2\}$. An \emph{$s_i$-strip} is
  a path $n\ldots m\in V_i.(V_p\smallsetminus V_i)^*$ in $A_p$ such
  that the successor of $m$ in $A_p$ is a node in $V_i$.
\label{def:strip}
\end{definition}

\begin{figure}[t]
  \centering
  \begin{tikzpicture}[graph]
    \node[node] (p) at (0,5) {$p$};
    \node[node,fill=blue!30] (n1) at (-1.05,2.55) {$n_1$};
    \node[node,fill=red!30] (n2) at (0,3) {$n_2$};
    \node[node] (n3) at (1.05,2.55) {$n_3$};
    \node[node] (n4) at (1.5,1.5) {$n_4$};
    \node[node,fill=red!30] (n5) at (1.05,0.45) {$n_5$};
    \node[node] (n6) at (0,0) {$n_6$};
    \node[node,fill=blue!30] (n7) at (-1.05,0.45) {$n_7$};
    \node[node] (n8) at (-1.5,1.5) {$n_8$};
    \draw[->,bend right=15] (p) to (n1);
    \draw[->] (p) to (n2);
    \draw[->,out=-20,in=20,looseness=1.4] (p) to (n5);
    \draw[->,out=-160,in=160,looseness=1.4] (p) to (n7);
    \draw[->,bend left=12] (n1) to (n2);
    \draw[->,bend left=12] (n2) to (n3);
    \draw[->,bend left=12] (n3) to (n4);
    \draw[->,bend left=12] (n4) to (n5);
    \draw[->,bend left=12] (n5) to (n6);
    \draw[->,bend left=12] (n6) to (n7);
    \draw[->,bend left=12] (n7) to (n8);
    \draw[->,bend left=12] (n8) to (n1);
    \begin{scope}[on background layer]
      \draw[line width=11pt,red!15,line cap=round,bend left=20]
        (n2.south) to (n3.south west) to (n4.west);
      \draw[line width=11pt,red!15,line cap=round,bend left=20]
        (n5.north west) to (n6.north) to (n7.north east) to (n8.east) to (n1.south east);
      \draw[line width=12pt,blue!15,line cap=round,bend left=20]
        (n7.south west) to (n8.west);
      \draw[line width=12pt,blue!15,line cap=round,bend left=20]
        (n1.north west) to (n2.north) to (n3.north east) to (n4.east) to
        (n5.south east) to (n6.south);
    \end{scope}
%    \tikzstyle{sccname} = [red,anchor=south east,outer xsep=13pt]
  \end{tikzpicture}
  \caption{An example of $A_p$ with $V_1=\{n_1,n_7\}$ (blue nodes) and
    $V_2=\{n_2,n_5\}$ (marked as red nodes), $s_1$-strips
    $n_1n_2n_3n_4n_5n_6$ and $n_7n_8$ (blue strips), and $s_2$-strips
    $n_2n_3n_4$ and $n_5n_6n_7n_8n_1$ (red strips)}
  \label{fig:strips}
\end{figure}
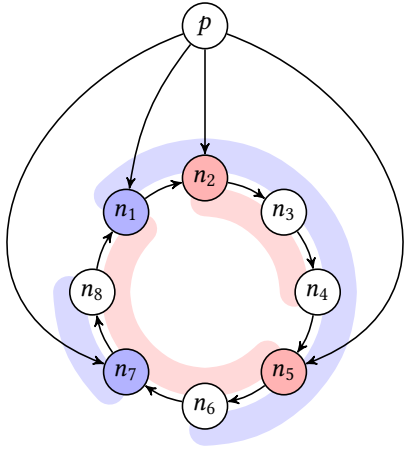

An example of $A_p$ with $s_i$-strips is in
Figure~\ref{fig:strips}. The $s_i$-strips directly say which pairs of
nodes of $V_p$ are in the same order on all maximal paths from $s_i$
in $G$.

\begin{lemma}\label{lem:strips}
  Let $G$ be a CFG, $p$ be a predicate node with successors $s_1,s_2$
  in $G$ and with at least two successors in $A_p$, and let
  $i\in\{1,2\}$. Further, let $a,b\in V_p\smallsetminus\{p\}$ be two
  distinct nodes. The node $a$ is before any occurrence of $b$ on all
  maximal paths from $s_i$ in $G$ if and only if there is an
  $s_i$-strip containing both $a$ and $b$ where $a$ is before $b$.
\end{lemma}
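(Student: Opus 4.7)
The plan is to translate the statement about maximal paths from $s_i$ in $G$ into one about cyclic traversals of the cycle of $A_p$, then read off the answer from the $s_i$-strip decomposition. By Lemma~\ref{lem:gap2}, the orderings of first occurrences of nodes of $V_p\smallsetminus\{p\}$ on maximal paths from $s_i$ in $G$ coincide with those on maximal paths from nodes of $V_i$ in $A_p$. Moreover, since $a,b\in V_p\smallsetminus\{p\}$ and, by Lemma~\ref{lem:Ap}, $p$ is a trivial SCC in both $A_p$ and $G$, every maximal path from $s_i$ in $G$ contains both $a$ and $b$, so ``first occurrence'' is unambiguous on both sides. The claim thus reduces to: on every maximal path from every $n\in V_i$ in $A_p$, the first occurrence of $a$ precedes the first occurrence of $b$ if and only if $a$ and $b$ lie in a common $s_i$-strip with $a$ before $b$.

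Next I would unfold the structure of $A_p$ given by Lemma~\ref{lem:Ap}: all nodes of $V_p\smallsetminus\{p\}$ lie on a single cycle $C$, and $V_i\subseteq C$. Hence the unique maximal path starting at $n\in V_i$ in $A_p$ is the infinite cyclic traversal of $C$ from $n$. I would then record two facts about $s_i$-strips: (i) each node of $V_p\smallsetminus\{p\}$ belongs to exactly one $s_i$-strip, either as its $V_i$-head (when it is in $V_i$) or as an interior node; and (ii) the cyclic traversal from $n\in V_i$ visits the $s_i$-strips in cyclic order, beginning with the one headed by $n$, so the first occurrence of any node $x$ along the traversal is exactly its unique position inside its strip.

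For the $(\Leftarrow)$ direction, if $a$ and $b$ share an $s_i$-strip $S$ with $a$ before $b$, then for any $n\in V_i$ the cyclic traversal enters $S$ at its $V_i$-head and encounters $a$ strictly before $b$; by (i), $b$ cannot occur earlier. For $(\Rightarrow)$ I would argue by contrapositive. If $a$ and $b$ belong to different $s_i$-strips, I would start the traversal at the head of the strip containing $b$: it traverses this strip first, visiting $b$ before ever reaching $a$. If instead $a$ and $b$ share a strip with $b$ before $a$, starting at that strip's head again visits $b$ first. Either way the $A_p$-side condition fails, and hence so does the $G$-side condition.

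The main obstacle is pedantic handling of the case where $a$ or $b$ is itself a strip head (i.e., lies in $V_i$) and the case where the chosen starting node $n$ coincides with $a$ or $b$. Both situations are absorbed by fact (ii): the first occurrence of a node along the traversal from $n$ is its unique strip position, and a head is visited at the very start of its own strip, so no separate case analysis is needed.
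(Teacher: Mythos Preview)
Your proposal is correct and follows essentially the same approach as the paper: both reduce to $A_p$ via Lemma~\ref{lem:gap2}, prove $(\Leftarrow)$ by observing that any cyclic traversal visits a strip in order, and prove $(\Rightarrow)$ by contrapositive, starting the traversal at the head of $b$'s strip (or swapping $a$ and $b$ when they share a strip with $b$ first). Your facts (i) and (ii) merely make explicit what the paper leaves as ``obvious,'' so there is no substantive difference.
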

\begin{proof}
  (``$\Longleftarrow$'') Let $a,b$ be two nodes such that $a$ appears
  before $b$ in an $s_i$-strip. It is obvious that the first
  occurrence of $a$ appears before the first occurrence of $b$ on
  every maximal path from nodes of $V_i$ in $A_p$. Due to
  Lemma~\ref{lem:gap2}, the first occurrences of $a$ and $b$ have the
  same order also in all maximal paths from $s_i$ in $G$.
  
  (``$\Longrightarrow$'') We prove this implication by
  contraposition. The case when $b$ is before $a$ in an $s_i$-strip is
  actually proven by the previous paragraph where one just swaps $a$
  and $b$. To prove the remaining case, assume that $a$ and $b$ are in
  different $s_i$-strips. Let $n\in V_i$ be the first node of the
  $s_i$-strip containing $b$. Then the maximal path in $A_p$ from $n$
  contains $b$ before the first occurrence of $a$. Due to
  Lemma~\ref{lem:gap2}, there exists a maximal path in $G$ from $s_i$
  where $b$ is before any occurrence of $a$.
\end{proof}

The following theorem is an immediate corollary of
Lemmas~\ref{lem:noDOD} and~\ref{lem:strips}.
\begin{theorem}\label{thm:DOD}
  Let $G=(V,E)$ be a CFG and $p,a,b\in V$ be three distinct nodes such
  that $p$ is a predicate with successors $s_1$ and $s_2$. Then
  % $\dod{p}{a}{b}$
  $a,b$ are DOD on $p$
  if and only if
  \begin{itemize}
  \item $p$ has at least two successors in $A_p$, 
  \item there exists an $s_1$-strip in $A_p$ that contains $a$ before $b$, and
  \item there exists an $s_2$-strip in $A_p$ that contains $b$ before $a$.
  \end{itemize}
\end{theorem}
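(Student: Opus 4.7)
The plan is to derive the theorem directly from Lemmas~\ref{lem:noDOD} and~\ref{lem:strips}, handling both directions of the equivalence separately while carefully tracking the membership of $a$ and $b$ in $V_p\smallsetminus\{p\}$.

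For the ($\Longrightarrow$) direction, I would assume $a,b$ are DOD on $p$ and extract all three conclusions from the DOD definition. The first bullet — that $p$ has at least two successors in $A_p$ — follows by contraposition of Lemma~\ref{lem:noDOD}: if $p$ had zero or one successor in $A_p$, no pair could be DOD on $p$. Next, since all maximal paths from $p$ contain both $a$ and $b$ by Definition~\ref{def:dod}, we have $a,b\in V_p$; combined with $a,b\neq p$ from the hypothesis, we get $a,b\in V_p\smallsetminus\{p\}$, so Lemma~\ref{lem:strips} applies. Applying it with $i=1$ turns the DOD condition that $a$ precedes any $b$ on all maximal paths from $s_1$ into the existence of an $s_1$-strip containing $a$ before $b$; applying it with $i=2$ yields the symmetric statement for $s_2$.

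For the ($\Longleftarrow$) direction, I would start from the three bullets and reconstruct the three clauses of Definition~\ref{def:dod}. The first clause — that every maximal path from $p$ contains both $a$ and $b$ — holds because $a$ and $b$ appear in strips of $A_p$, hence are nodes of $A_p$ and therefore in $V_p$, and by definition every maximal path from $p$ in $G$ visits all of $V_p$. The second and third clauses follow by invoking the ($\Longleftarrow$) direction of Lemma~\ref{lem:strips} twice, once per $s_i$-strip hypothesis, which translates ``$a$ appears before $b$ in an $s_i$-strip'' into ``$a$ appears before any occurrence of $b$ on all maximal paths from $s_i$ in $G$.''

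Since both directions are essentially bookkeeping on top of the two cited lemmas, I do not expect real obstacles. The only minor care point is making sure the hypothesis ``$a,b\in V_p\smallsetminus\{p\}$'' required by Lemma~\ref{lem:strips} is properly justified: in the forward direction it comes from the DOD definition plus $a,b\neq p$, and in the backward direction it comes from the fact that strips lie in $V_p$ together with the observation that, once $p$ has at least two successors in $A_p$, Lemma~\ref{lem:Ap} ensures $p$ is a trivial SCC and therefore appears on no strip, so the strip nodes are automatically in $V_p\smallsetminus\{p\}$.
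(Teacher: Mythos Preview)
Your proposal is correct and matches the paper's approach: the paper simply states that the theorem is an immediate corollary of Lemmas~\ref{lem:noDOD} and~\ref{lem:strips}, and you spell out exactly that derivation in both directions. Your extra care in justifying $a,b\in V_p\smallsetminus\{p\}$ (including the appeal to Lemma~\ref{lem:Ap} to rule out $p$ lying on a strip) is a detail the paper leaves implicit, but it is sound and does not depart from the intended argument.
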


Let us consider again the $A_p$ in Figure~\ref{fig:strips}. The
theorem implies that
% $\dod{p}{n_1}{n_5}$
nodes $n_1,n_5$ are DOD on $p$ as they appear in $s_1$-strip
$n_1n_2n_3n_4n_5n_6$ and in $s_2$-strip $n_5n_6n_7n_8n_1$ in the
opposite order. Nodes $n_1,n_6$ are DOD on $p$ for the same reason.

% In general, an $s_1$-strip $m_1\ldots m_k$ can contain nodes that are
% DOD on $p$ if and only if there exists an $s_2$-strip of the form
% $m_j\ldots m_ko_1\ldots o_lm_1\ldots m_i$ for some $1\le i<j\le
% k$. That is, these two strips together have to ``cover'' the whole
% cycle $C$ of $A_p$ and they have to overlap on both ends by at least
% one node, like the strips $n_1n_2n_3n_5n_5n_6$ and $n_5n_6n_7n_8n_1$
% in Figure~\ref{fig:strips}. Note that there always exists at most one
% pair of such strips in a given $A_p$. Indeed, if there is such a pair
% of an $s_1$-strip and an $s_2$-strip, than the other $s_1$-strips are
% completely inside the $s_2$-strip and thus all nodes contained in them
% are contained also in the $s_2$-strip in the same order.  However,
% when such a pair of strips exists, then it induces the relation
% $\dod{p}{a}{b}$ for all $a\in\{m_1,\ldots,m_i\}$ and
% $b\in\{m_j,\ldots,m_k\}$.

With use of the previous theorem, we can find a regular language over
$V_p$ such that there exist nodes $a,b$ DOD on $p$ iff an unfolding
of the cycle in $A_p$ is in the language.

\begin{theorem}\label{thm:DOD2}
  Let $G$ be a CFG, $p$ be a predicate node with successors $s_1,s_2$
  in $G$ and with at least two successors in $A_p$. Further, let
  $V_1\cap V_2=\emptyset$ and $U=V_p\smallsetminus(V_1\cup V_2)$.
  There are some nodes $a,b$ DOD on $p$ if and only if the cycle in
  $A_p$ has an unfolding of the form
  $V_1.U^*.(V_2.U^*)^*.V_2.U^*.(V_1.U^*)^*$
\end{theorem}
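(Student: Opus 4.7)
The plan is to derive Theorem~\ref{thm:DOD2} from Theorem~\ref{thm:DOD}. The hypothesis that $p$ has at least two successors in $A_p$ forces, via Lemma~\ref{lem:Ap}, that $A_p$ consists of $p$ together with a cycle $C$ containing every other node; combined with $V_1 \cap V_2 = \emptyset$ and the fact that both $V_1$ and $V_2$ are nonempty (they partition the successors of $p$ in $A_p$), the claim reduces to a purely combinatorial statement about how the labels $V_1$, $V_2$, and $U$ are interleaved along $C$.

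For the $(\Leftarrow)$ direction, I assume $C$ admits an unfolding matching $V_1 U^* (V_2 U^*)^+ (V_1 U^*)^*$. I take $a$ to be the starting $V_1$ node and $b$ the last $V_2$ node of that unfolding. Reading forward from $a$ to $b$ the regex permits only $V_2$ and $U$ nodes strictly between them, so the $s_1$-strip starting at $a$ contains $b$ with $a$ before $b$. Reading forward from $b$ back to $a$ (through the $(V_1 U^*)^*$ tail and then, by wrapping, through the initial $V_1 U^*$) the regex permits only $V_1$ and $U$ nodes strictly between them, so the $s_2$-strip starting at $b$ contains $a$ with $b$ before $a$. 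Theorem~\ref{thm:DOD} then concludes that $a,b$ are DOD on $p$.

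For the $(\Rightarrow)$ direction, I assume some $a,b$ are DOD on $p$. Theorem~\ref{thm:DOD} supplies an $s_1$-strip placing $a$ before $b$ and an $s_2$-strip placing $b$ before $a$. Unpacking Definition~\ref{def:strip}, these mean that on the forward cycle arc from $a$ to $b$ no $V_1$ node occurs strictly between them and $b \notin V_1$, while on the forward arc from $b$ to $a$ no $V_2$ node occurs strictly between them and $a \notin V_2$. Consequently all of $V_2$ lies on the forward arc from $a$ to $b$, and all of $V_1$ lies on the forward arc from $b$ to $a$ (possibly including $a$). I then unfold $C$ starting at $v^*$, the last $V_1$ node encountered on the arc from $b$ to $a$; hence $v^* = a$ when $a \in V_1$ and $v^*$ sits strictly before $a$ on that arc otherwise, with $v^*$ existing because $V_1 \neq \emptyset$. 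Reading from $v^*$, the unfolding consists of: $v^*$ (a $V_1$); then a possibly empty run of $U$ nodes from the successor of $v^*$ through $a$ (no $V_1$ by the choice of $v^*$, no $V_2$ because this segment lies in the $V_2$-free arc); then the subsegment from the successor of $a$ to $b$, which lies in $V_2 \cup U$ and contains at least one $V_2$; and finally the subsegment from the successor of $b$ to the predecessor of $v^*$, which lies in $V_1 \cup U$. I route the leading $U$-run together with any $U$s preceding the first $V_2$ into the initial $U^*$, the $V_2$-block with its interleaved and trailing $U$s (extending through any $U$s before the first $V_1$ of the tail) into $(V_2 U^*)^+$, and the remaining $V_1$s and $U$s into $(V_1 U^*)^*$, yielding a match.

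The hard part will be careful boundary bookkeeping: splitting on whether $a \in V_1$ (so $v^* = a$ and the intermediate $U$-run vanishes) or $a \in U$, on whether $b \in V_2$ or $b \in U$, and on whether the tail $(V_1 U^*)^*$ block is empty. Verifying that each $U^*$ in the regex correctly absorbs the $U$-runs straddling the boundaries between the three structural segments, in particular the $U$-bridge between the last $V_2$ and the first tail $V_1$, is the most delicate piece of the argument.
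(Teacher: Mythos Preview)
Your proposal is correct and follows essentially the same approach as the paper: both directions go through Theorem~\ref{thm:DOD}, and your unfolding node $v^*$ coincides with the paper's $m_1$ (the start of the $s_1$-strip containing $a$). The paper's $(\Rightarrow)$ argument is terser because it observes directly that any $s_i$-strip lies in $V_i.(U\cup V_{3-i})^*$ and splits the unfolding at the start $m_j$ of the relevant $s_2$-strip, which absorbs into two prefix-of-strip pieces the boundary bookkeeping you carry out explicitly via arcs.
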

\begin{proof}
  ($\Longrightarrow$) Let $\dod{p}{a}{b}$ for some $a,b$.
  % The condition $V_1\cap V_2\neq\emptyset$ follows from
  % Lemma~\ref{lem:v1v2}.
  Theorem~\ref{thm:DOD} implies that there is an $s_1$-strip
  $m_1\ldots m_k$ that contains $a$ before $b$ and an $s_2$-strip that
  contains $b$ before $a$. Hence, the $s_2$-strip has to have the form
  $m_j\ldots m_ko_1\ldots o_lm_1\ldots m_i$ for some $1\le i<j\le k$
  such that $a$ is contained in $m_1\ldots m_i$ and $b$ in
  $m_j\ldots m_k$. The path $m_1\ldots m_ko_1\ldots o_l$ is the
  unfolding of the cycle in $A_p$ from $m_1$. Every $s_1$-strip is in
  $V_1.(U\cup V_2)^*=V_1.U^*.(V_2.U^*)^*$. Similarly, every
  $s_2$-strip is in $V_2.U^*.(V_1.U^*)^*$. Altogether, we get that the
  unfolding from $m_1$ satisfies
  $m_1\ldots m_ko_1\ldots o_l\in
  V_1.U^*.(V_2.U^*)^*.V_2.U^*.(V_1.U^*)^*$.

  ($\Longleftarrow$) Let
  $m_1\ldots m_t\in V_1.U^*.(V_2.U^*)^*.V_2.U^*.(V_1.U^*)^*$ be an
  unfolding of the cycle in $A_p$. Let $m_j$ be the node of $V_2$ in
  the unfolding. Clearly, $m_1$ is before $m_j$ in the $s_1$-strip
  starting with $m_1$ and $m_j$ is before $m_1$ in the $s_2$-string
  starting in $m_j$. Theorem~\ref{thm:DOD} implies that
  $\dod{p}{m_1}{m_j}$.
\end{proof}  

Finally, an unfolding of the correct shape can be used for direct
computation of nodes that are DOD on $p$.

\begin{theorem}\label{thm:DOD3}
  Let $G$ be a CFG, $p$ be a predicate node with successors $s_1,s_2$
  in $G$ and with at least two successors in $A_p$. Further, let
  $V_1\cap V_2=\emptyset$ and $U=V_p\smallsetminus(V_1\cup V_2)$ and
  let $A_p$ have an unfolding of the form
  $V_1.U^*.(V_2.U^*)^*.V_2.U^*.(V_1.U^*)^*$. Then there is exactly one
  path $m_1\ldots m_i\in V_1.U^*.V_2$ and exactly one path
  $o_1\ldots o_j\in V_2.U^*.V_1$ on the cycle. Moreover,
  $\dod{p}{a}{b}$ if and only if $m_1\ldots m_{i-1}$ contains $a$ and
  $o_1\ldots o_{j-1}$ contains $b$ (or the other way).
\end{theorem}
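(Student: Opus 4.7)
The plan is to read off the cycle structure from the given unfolding, isolate two distinguished ``main'' strips whose intersection contains all DOD candidates, and finish with a direct order comparison inside them. Throughout I would denote the $V_1$-nodes on the cycle in cyclic order as $v_1^1,\ldots,v_1^{k_1}$ and the $V_2$-nodes as $v_2^1,\ldots,v_2^{k_2}$.

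\emph{Uniqueness.} The regular expression $V_1.U^*.(V_2.U^*)^*.V_2.U^*.(V_1.U^*)^*$ forces all $V_1$-nodes into a single contiguous block on the cycle (possibly interleaved with $U$-nodes) and likewise for $V_2$. A subpath in $V_1.U^*.V_2$ must be a $V_1$-node that reaches a $V_2$-node using only $U$-nodes in between, and on the cycle this can occur exactly once, at the transition from the end of the $V_1$-block to the start of the $V_2$-block; hence $m_1=v_1^{k_1}$ and $m_i=v_2^1$, uniquely. A symmetric argument gives $o_1=v_2^{k_2}$ and $o_j=v_1^1$.

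\emph{Main strips.} Each $s_1$-strip starts at a $V_1$-node and ends just before the next one cyclically, so only the strip starting at $m_1$ leaves the $V_1$-block and reaches any $V_2$-node; call it the main $s_1$-strip. Listed in order it is $m_1,\ldots,m_{i-1}$, then the $V_2$-block from $m_i$ through $o_1$, then $o_2,\ldots,o_{j-1}$. Symmetrically the main $s_2$-strip starts at $o_1$ and in order is $o_1,\ldots,o_{j-1}$, then the $V_1$-block from $o_j$ through $m_1$, then $m_2,\ldots,m_{i-1}$. Intersecting these two lists leaves exactly $\{m_1,\ldots,m_{i-1}\}\cup\{o_1,\ldots,o_{j-1}\}$.

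\emph{DOD characterization.} I would then invoke Theorem~\ref{thm:DOD}. The ``if'' direction is a direct order check: if $a\in m_1\ldots m_{i-1}$ and $b\in o_1\ldots o_{j-1}$, then $a$ precedes $b$ in the main $s_1$-strip and $b$ precedes $a$ in the main $s_2$-strip. For the ``only if'' direction---which is the main obstacle---I would first argue that any DOD pair must lie jointly in both main strips: a pair confined to a single non-main $s_1$-strip sits inside one gap of the $V_1$-block, where its cyclic order coincides with its order in the only $s_2$-strip containing it (the main one), contradicting the reversal required by Theorem~\ref{thm:DOD}; the dual argument excludes non-main $s_2$-strips. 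Restricted to the above intersection, a pair with both nodes in $\{m_1,\ldots,m_{i-1}\}$ or both in $\{o_1,\ldots,o_{j-1}\}$ appears in the same order in both main strips and is thus not DOD, whereas a mixed pair reverses its order and is DOD. This yields exactly the stated characterization.
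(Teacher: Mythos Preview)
Your proposal is correct and follows essentially the same approach as the paper. Both arguments hinge on the same key observation: every non-main $s_1$-strip is a subpath of the main $s_2$-strip (and dually), so only the two ``main'' strips can witness a DOD pair; your write-up makes this structure more explicit (identifying the blocks, computing the intersection, then comparing orders), while the paper argues the same points more tersely via a diagram.
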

\begin{proof}
  Assume that the cycle has an unfolding of the form
  $V_1.U^*.(V_2.U^*)^*.V_2.U^*.(V_1.U^*)^*$. The fact that the cycle
  has exactly one path $m_1\ldots m_i\in V_1.U^*.V_2$ and one path
  $o_1\ldots o_j\in V_2.U^*.V_1$ is obvious.

  The equivalence follows from this picture.
    \begin{center}
    \setlength{\tabcolsep}{0pt}
    \begin{tabular}{ccccccc}
      \multicolumn{5}{c}{$s_1$-strip} 
      \\ \cmidrule[2pt]{1-5}
      $\boxed{V_1.U^*}$ & $~.~$ & $(V_2.U^*)^*$ & $~.~$ & $\boxed{V_2.U^*}$
      & $~.~$ & $(V_1.U^*)^*$
      \\[2.5pt] \cmidrule[2pt]{1-1} \cmidrule[2pt]{5-7}
      \smash{$s_2$-strip} &&&& \multicolumn{3}{c}{$s_2$-strip} \\
      \smash{(part II)}  &&&& \multicolumn{3}{c}{(part I)}
    \end{tabular}
  \end{center}
  The path $m_1\ldots m_{i-1}$ corresponds to the left box in the
  unfolding and the path $o_1\ldots o_{j-1}$ to the right box. Each
  $a$ of the path $m_1\ldots m_{i-1}$ is before each $b$ of path
  $o_1\ldots o_{j-1}$ in the marked $s_1$-strip, and they are in the
  opposite order in the marked $s_2$-strip. Due to
  Theorem~\ref{thm:DOD}, we have $\dod{p}{a}{b}$. There is no other
  DOD dependence induced by the pair of marked strips. As all other
  $s_1$-strips are subparts of the marked $s_2$-strip, they cannot
  induce any DOD dependence on $p$. Symmetric argument works for all
  other $s_2$-strips.
\end{proof}

\subsection{New algorithm for DOD: pseudocode and complexity}

The algorithm is shown in Algorithms~\ref{alg:vps}
and~\ref{alg:dodntscd}. Because nearly all applications of DOD need
also NTSCD, we presented the DOD algorithm with a simple extension
(marked with gray lines) that computes NTSCD simultaneously with DOD.

\begin{algorithm}[t!]
\algrenewcommand\algorithmicrequire{\textbf{Require:}}
\begin{algorithmic}[1]
   \Require a CFG $G = (V,E)$ 
   \Ensure sets $V_n=\{m\in V\mid m\text{ is on all maximal paths from }n\}$ for all $n\in V$ 
   \Statex
   \Procedure{visit}{$n$, $r$} \Comment {Auxiliary procedure}
   \State $n.\mathit{counter} \gets n.\mathit{counter} - 1$
   \If {$n.\mathit{counter} = 0$}
     \State $V_n \gets V_n \cup \{r\}$
     \For {$m \in \preds(n)$}
       \State \visit($m$, $r$)
     \EndFor
   \EndIf
   \EndProcedure\\

   \Procedure{compute}{$n$} \Comment {``Coloring the graph red'' for a given $n$}
    \For {$m \in V$} %\Comment{ Initialize }
      \State $m.\mathit{counter} \gets |\succs(m)|$
    \EndFor\\
    \State $V_n \gets V_n \cup \{n\}$
    \For {$m \in \preds(n)$}
      \State \visit($m$, $n$)
    \EndFor
   \EndProcedure\\

   \Procedure{compute$V_p$s}{} \Comment {The main procedure}
   \For {$n \in V$}
     \State $V_n \gets \emptyset$
   \EndFor
   \For {$n \in V$}
     \State \compute($n$)
   \EndFor
   \EndProcedure
\end{algorithmic}
\caption{%The algorithm that for each node computes a set of nodes that lie
         %on all maximal paths from the node.
         The algorithm computing $V_n$ for all nodes $n$
         }
\label{alg:vps}
\end{algorithm}

The DOD algoritm starts at the bottom of Algorithm~\ref{alg:dodntscd}.
The first step is to compute the sets $V_p$ for all predicate nodes
$p$ of a given CFG $G$. The computation of predicate nodes can be
found in Algorithms~\ref{alg:vps}. It is a slightly modified version
of Algorithm~\ref{alg:algorithm}. Recall that the procedure
$\compute(n)$ of Algorithm~\ref{alg:algorithm} marks red every node
such that all maximal paths from the node contain $n$. The procedure
$\compute(n)$ of Algorithm~\ref{alg:vps} does in principle the same,
but instead of the red color it marks the nodes with the identifier of
the node $n$. Every node $m$ collects these marks in set $V_m$. After
we run $\compute(n)$ for all the nodes $n$ in the graph, each node $m$
has in its set $V_m$ precisely all nodes that are on all maximal paths
from $m$. For the computation of DOD, only the sets $V_p$ for
predicate nodes $p$ are needed, but the extension computing NTSCD
may actually use all these sets.

\begin{algorithm}[t]
  \begin{algorithmic}[1]
    \Require a CFG $G = (V,E)$
    \Ensure the DOD relation stored in $\mathit{dod}$
    \textcolor{gray}{and the NTSCD relationstored in $\mathit{ntscd}$}
    \Statex
    \Procedure{computeDOD}{$p$} \Comment {Computation of DOD for predicate $p$}
    \State $A_p \gets$ \proc{build}$A_p(p)$ \Comment{ Get the graph $A_p$}
    \If { $p$ has at most one successor in $A_p$ } \Comment{ Apply Lemma~\ref{lem:noDOD}}
    \State \Return $\emptyset$
    \EndIf
    \\
    \State $V_1,V_2 \gets \text{\proc{compute}}V_1V_2(p)$
      \Comment {Get the sets $V_1,V_2$ of nodes on the cycle of $A_p$}
    \If {$V_1\cap V_2\neq\emptyset$} \Comment {Apply Lemma~\ref{lem:v1v2}}
    \State \Return $\emptyset$
    \EndIf
    \\
    \State $n_1n_2\ldots n_t \gets \text{\proc{unfoldCycle}}(A_p,V_1)$
      \Comment {Unfold the cycle of $A_p$ from a node in $V_1$}
    \State $U \gets V_p\smallsetminus(V_1\cup V_2)$
    \If {$n_1n_2\ldots n_t\not\in(V_1.U^*)^*.V_1.U^*.(V_2.U^*)^*.V_2.U^*.(V_1.U^*)^*$}
      \Comment {Apply Theorem~\ref{thm:DOD2}}
    \State \Return $\emptyset$
    \EndIf
    \\
    \State $m_1\ldots m_i \gets \text{\proc{extract}}(n_1n_2\ldots n_t,V_1.U^*.V_2)$
      \Comment {Apply Theorem~\ref{thm:DOD3}} 
    \State $o_1\ldots o_j \gets \text{\proc{extract}}(n_1n_2\ldots n_t,V_2.U^*.V_1)$
    \State \Return $\{\dod{p}{a}{b} \mid a \in \{m_1,\ldots,m_{i-1}\},
      b \in \{o_1,\ldots,o_{j-1}\}\}$
    \EndProcedure
    \\      
    \color{gray}\Procedure{computeNTSCD}{$p$}
      \Comment {Computation of NTSCD for predicate $p$}
    \State $\{s_1, s_2\} \gets \succs(p)$
    \State \Return $\{\ntscd{p}{n}\mid n\in(V_{s_1}\smallsetminus V_{s_2})
      \cup (V_{s_2}\smallsetminus V_{s_1})\}$
    \EndProcedure
    \color{black}
    \\
    \\ \Comment {\textbf{Start of the algorithm}} 
    \State \computeVps \Comment {Get $V_p$ for all predicates $p$}
    \State $\mathit{dod} \gets \emptyset$
    \State \textcolor{gray}{$\mathit{ntscd} \gets \emptyset$}
    \For {$p \in \preno{G}$}
      \Comment{ Computation of DOD \textcolor{gray}{and NTSCD}}
    \State $\mathit{dod} \gets \mathit{dod} \cup \text{\proc{computeDOD}}(p)$
    \State \textcolor{gray}{$\mathit{ntscd} \gets \mathit{ntscd} \cup
      \text{\proc{computeNTSCD}}(p)$}
    \EndFor
  \end{algorithmic}
  \caption{The new DOD algorithm. It computes also NTSCD if the gray
    lines are included.  The procedure $\computeVps$ is given in
    Algorithm~\ref{alg:vps}.}
  \label{alg:dodntscd}
\end{algorithm}

After we calculate sets $V_p$, we compute DOD and potentially also
NTSCD dependencies for each predicate node separately by procedures
\proc{computeDOD}$(p)$ and \proc{computeNTSCD}$(p)$. The procedure
\proc{computeDOD}$(p)$ first constructs the projection graph $A_p$
with use of \proc{build}$A_p(p)$. Nodes of the graph are these of
$V_p$. To compute edges, we trigger a depth-first search in $G$ from
each $n\in V_p$. If we find a node $m \in V_p$, we add the edge
$(n,m)$ to the graph $A_p$ and stop the search on this path.  When the
graph $A_p$ is constructed, we check its structure.  If the node $p$
has at most one successor in $A_p$, we return the empty relation due
to Lemma~\ref{lem:noDOD}.

The next step is to compute the sets $V_1$ and $V_2$ of nodes on
$V_p$ that are first-reachable from the two successors $s_1,s_2$ of $p$ in
$G$ via nodes outside $V_p$. Again, we apply a similar depth-first
search as in the construction of $A_p$ described above. If the sets
$V_1,V_2$ are not disjoint, we return the empty relation due to
Lemma~\ref{lem:v1v2}.

Then we unfold the cycle in $A_p$ from an arbitrary node in $V_1$,
compute the set $U$, and return the empty relation if the unfolding
does not match the pattern of Theorem~\ref{thm:DOD2}. We actually use
a slightly different pattern. But the check is correct as a cycle has
an unfolding of the form $V_1.U^*.(V_2.U^*)^*.V_2.U^*.(V_1.U^*)^*$ if
and only if all unfoldings of the cycle from $V_1$ have the form
$(V_1.U^*)^*.V_1.U^*.(V_2.U^*)^*.V_2.U^*.(V_1.U^*)^*$.

Finally, we extract the paths of the form $V_1.U^*.V_2$ and
$V_2.U^*.V_1$ from the unfolding. Note that the last node of the
latter path can be in fact the first node of the unfolding.  Finally,
we compute the DOD dependencies according to Theorem~\ref{thm:DOD3}.

The procedure \proc{computeNTSCD}$(p)$ used for the computation of
NTSCD simply follows Definition~\ref{def:ntscd}: it makes dependent on
$p$ each node that is on all maximal paths from the successor $s_1$
but not on all maximal paths from the successor $s_2$ or symmetrically
for $s_2$ and $s_1$.

As the correctness of our algorithm comes directly from the
observations made in the previous subsection, it remains only to
analyze its complexity. The procedure \proc{compute$V_p$s} consists of
two cycles in sequence. The first cycle runs in $O(|V|)$. The second
cycle calls $O(|V|)$-times the procedure $\compute(n)$. This procedure
is essentially identical to the procedure of the same name in
Algorithm~\ref{alg:algorithm} and so is its time complexity, namely
$O(|V|+|E|)$. Overall, the procedure \proc{compute$V_p$s} runs in
$O(|V|\cdot(|V|+|E|))$, which is $O(|V|^2)$ for control flow graphs.

Now we discuss the complexity of the procedure \proc{computeDOD}$(p)$.
Creating the $A_p$ graph requires calling depth-first search $O(|V|)$
times, which is $O(|V|\times|E|)$ in total. Computation of $V_1,V_2$
requires another two calls of depth-first search, which is in
$O(|E|)$. Checking that $V_1$ and $V_2$ are disjoint is in $O(|V|)$,
And so is unfolding a cycle, check of the unfolding for the pattern on
line~14, and the calls to \proc{extract}. The most demanding part is
actually the construction of the DOD relation on line~20, which takes
$O(|V|^2)$. Altogether, \proc{computeDOD}$(p)$ runs in
$O(|V|\times |E| + |V|^2)$ which simplifies to $O(|V|^2)$ for control
flow graphs.

\proc{computeDOD} is called $O(|V|)$ times, so the overall complexity
of computing DOD for a CFG $G=(V,E)$ is $O(|V|^3)$.  If we compute
also NTSCD, we make $O(|V|)$ extra calls to \proc{computeNTSCD}$(p)$,
where one call takes $O(|V|)$ time. Therefore, the asymptotic
complexity of computing NTSCD with DOD does not change from computing
DOD only.

DOD is ternary relation, therefore its maximal size is $O(|V|^3)$.
There also exists a graph for which the DOD relation reaches this limit
(a cycle on $|V|/2$ nodes together with $|V|/2$ predicate nodes; each
predicate node has the same successors on the cycle such that these
successors divide the cycle into two strips of length $|V|/4$).
Hence, our algorithm running in time $O(|V|^3)$ is optimal.

%}}}
%{{{ CC

\section{Comparision to control closures}
\label{sec:cc}

In 2011, \citet{Danicic11} introduced the notion of \emph{control
  closures (CC)}.  Control closures generalize control dependence from
control flow graphs to arbitrary graphs. In particular, \emph{strong
  control closure}, which is sensitive to non-termination, generalizes
strong forms of control dependence including NTSCD and DOD.

\begin{definition}[Strongly control-closed set]
  Let $G = (V, E)$ be a CFG and let $V' \subseteq V$.  The set $V'$ is
  \emph{strongly control-closed}\footnote{The definition has been
    adjusted to our settings, notably that predicates in our CFGs
    always have two outgoing edges (our CFGs are \emph{complete} in
    the terms of \citet{Danicic11}).} in $G$ if and only if for every
  node $v\in V \smallsetminus V'$ that is reachable in $G$ from a node
  of $V'$, one of these holds:
  \begin{itemize}
  \item any path starting at $v$ avoids $V'$ or
  \item all maximal paths from $v$ contain a node from $V'$ and the
    set $\Theta(v, V')$ of the so-called \emph{first-reachable
      elements} in $V'$ from $v$ defined as 
    $$\Theta(v, V') = \{y \mid v\ldots y \in (V \smallsetminus V')^+.V'
    \textrm{is a path in }G\}$$ has at most one element, i.e.,
    $|\Theta(v, V')| \le 1$.
  \end{itemize}
\end{definition}

\begin{definition}[Strong control closure]
  Let $G = (V, E)$ be a CFG and let $W\subseteq V$. A \emph{strong
    control closure} of $W$ is a strongly control-closed set
  $V'\subseteq W$ such that there is no strongly control-closed set
  $V''$ satisfying $W\subseteq V''\subsetneq V'$.
\end{definition}

%\marek{first-reachable elements pouzivame v DOD, mozna bychom to mohli definovat tam\dots}

Danicic et al.~present an algorithm for the computation of strong
control closures running in $O(|V|^4)$~\citep[Theorem~66]{Danicic11}.
In fact, the algorithm uses a procedure $\Gamma$ that is very similar
to our procedure $\compute(n)$ of Algorithm~\ref{alg:algorithm}.
% For a set $V'$, it computes the set of all
% nodes that lie on a maximal path which does not pass through
% $V'$. Instead of coloring nodes, $\Gamma$ iteratively builds the
% result by removing edges and adding nodes without successors to the
% result.

We can also define when a set is closed under NTSCD or DOD.
\begin{definition}[Sets closed under NTSCD/DOD]
Let $G = (V, E)$ be a CFG and let $V' \subseteq V$. 
\begin{itemize}
\item $V'$ is \emph{closed under NTSCD} if for each $n\in V'$ it holds that
  $\ntscd{p}{n}$ implies $p\in V'$.
\item $V'$ is \emph{closed under DOD} if for each $a,b\in V'$ it holds that
  $\dod{p}{a}{b}$ implies $p\in V'$.
\end{itemize}
\end{definition}
NTSCD and DOD closures are defined analogously to the strong control
closure. An NTSCD closure for a set $W\subseteq V$ can be computed by
gathering nodes backward reachable from $W$ via the NTSCD edges. The
DOD closure can be computed similarly. 

\citet[Lemmas~93 and 94]{Danicic11} proved that for a $G = (V,E)$ that
has a distinguished $\mathit{start}$ node from which all nodes in $V$
are reachable and a subset $V' \subseteq V$ such that
$\mathit{start} \in V'$, the set $V'$ is strongly control-closed if
and only if it is closed under NTSCD and DOD. Hence, the strong
control closures of the sets $W$ containing the $\mathit{start}$ node
can be computed also by computing NTSCD and DOD relations and the
backward reachability along the control dependencies. The complexity
of computing the strong control closures via NTSCD and DOD closures is
$O(|V|^3)$ as the backward reachability is dominated by the
computation of NTSCD and DOD relations.

% control dependence edges.  We can
% go in the other direction from a closure to dependencies by computing
% the closure for each node separately (which is obviously not very
% efficient).

% \todo{jak to souvsi s NTSCD/DOD?}
% The new NTSCD algorithm is in particular suitable for computing
% control closures, because each call to \textsc{compute(n)} actually
% computes the nodes that are required to close the set with respect to
% the node $n$.  That is, to close a set under NTSCD, we call
% $\compute(n)$ for the nodes $n$ in the set until a fixpoint is reached
% (which happens in at most $|V|$ steps).

A substantial difference between strong control closures and our
algorithms is that we are able to compute DOD and NTSCD separately,
whereas strong control closures are not. Moreover, our algorithm for
the computation of DOD and NTSCD closure is asymptotically faster.
% -- it runs
% in $O(|V|^3)$. \todo{Zduraznit, ze abychom dostali totez, musime
%   udelat tu zpetnou reachabilitu. A napsat, ze to neprida slozitost.}

%\marek{could our color(n) improve their algorithm? basic blocks vs. nodes?}

%}}}
%{{{ experiments

\section{Experimental Evaluation}
\label{sec:experiments}

\subsection{Implementation}

We implemented our algorithms for the computation of NTSCD and DOD
in C++ programming language on top of the LLVM~\citep{Lattner04} infrastructure.
The implementation is now a part of a well maintained library for program analysis.
We also implemented the original Ranganath et al.'s algorithms,
the fixed versions of these algorithms as described in Subsection%
~\ref{ssec:ntscd_ranganath} and Subsection~\ref{ssec:dod_ranganath},
and the algorithm for the computation of strong control closures.

There are several minor differences in the implementation of our algorithms
compared to the description of algorithms in the text.
The main difference is that we do not use recursion.
We have rewritten the algorithms to use an explicit stack in order to avoid
stack overflow on huge graphs. This step was rather a safeguard,
as we had not hit stack overflow during any experiment with the preliminary
versions of the algorithms that used recursion.
Still, we believe that the exposition with recursion is more clear,
therefore we describe the algorithms without the explicit stack.

Another, rather minor, difference is that we use sparse bitvectors instead
of bitvectors of a fixed length. This step only saves memory on sparse graphs.

In the implementation of control closures, we use our procedure $\compute$
to implement the function $\Gamma$. This should have only a positive effect
as this procedure is more efficient than iterating over all edges
in the copy of the graph and removing them~\citep{Danicic11}.

\subsection{Evaluation}

\begin{figure}[t]
\begin{tabular}{c p{1mm} c}
\hspace*{-3mm}
\includegraphics[width=6.5cm]{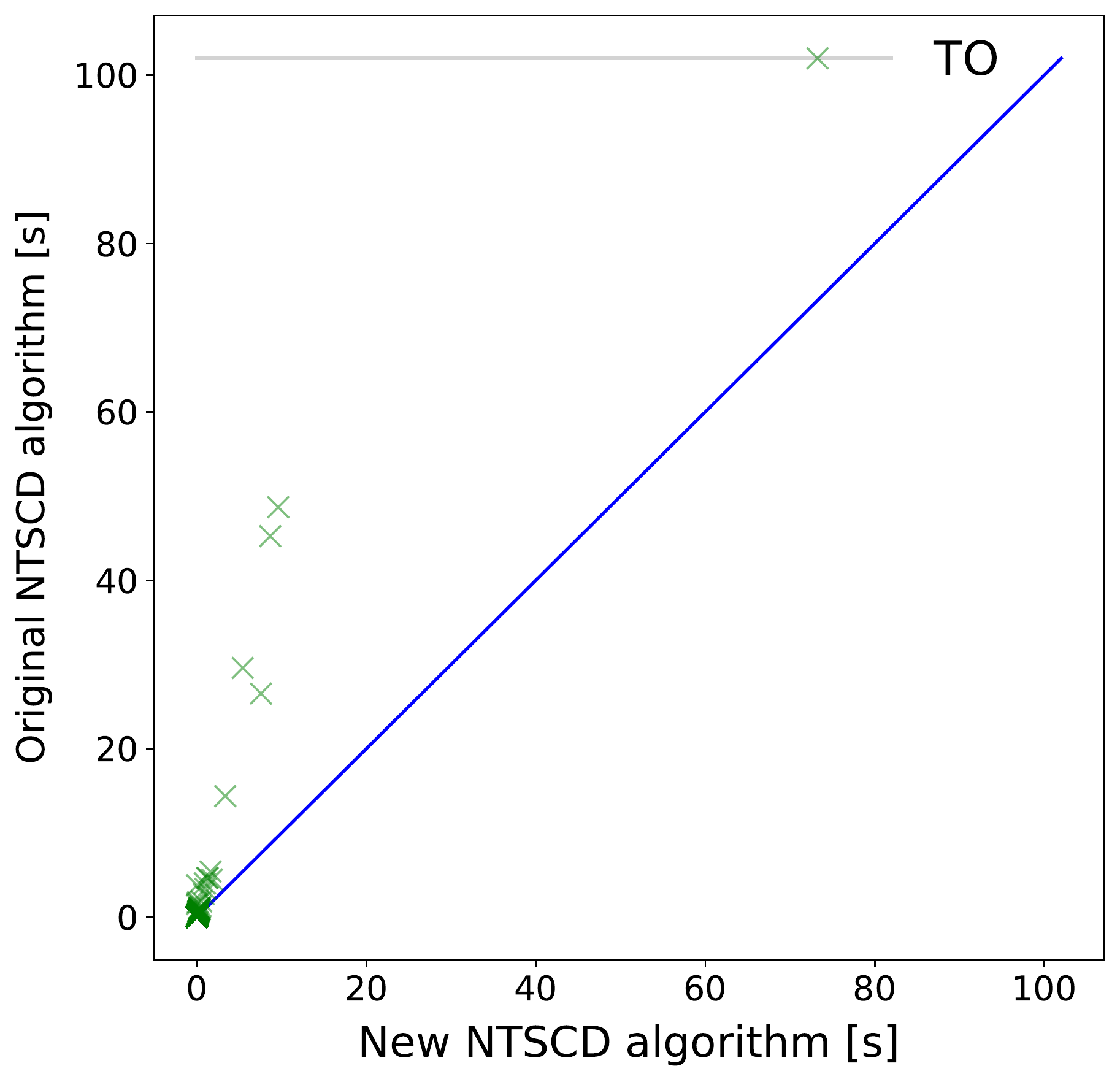}
&&
\includegraphics[width=6.5cm]{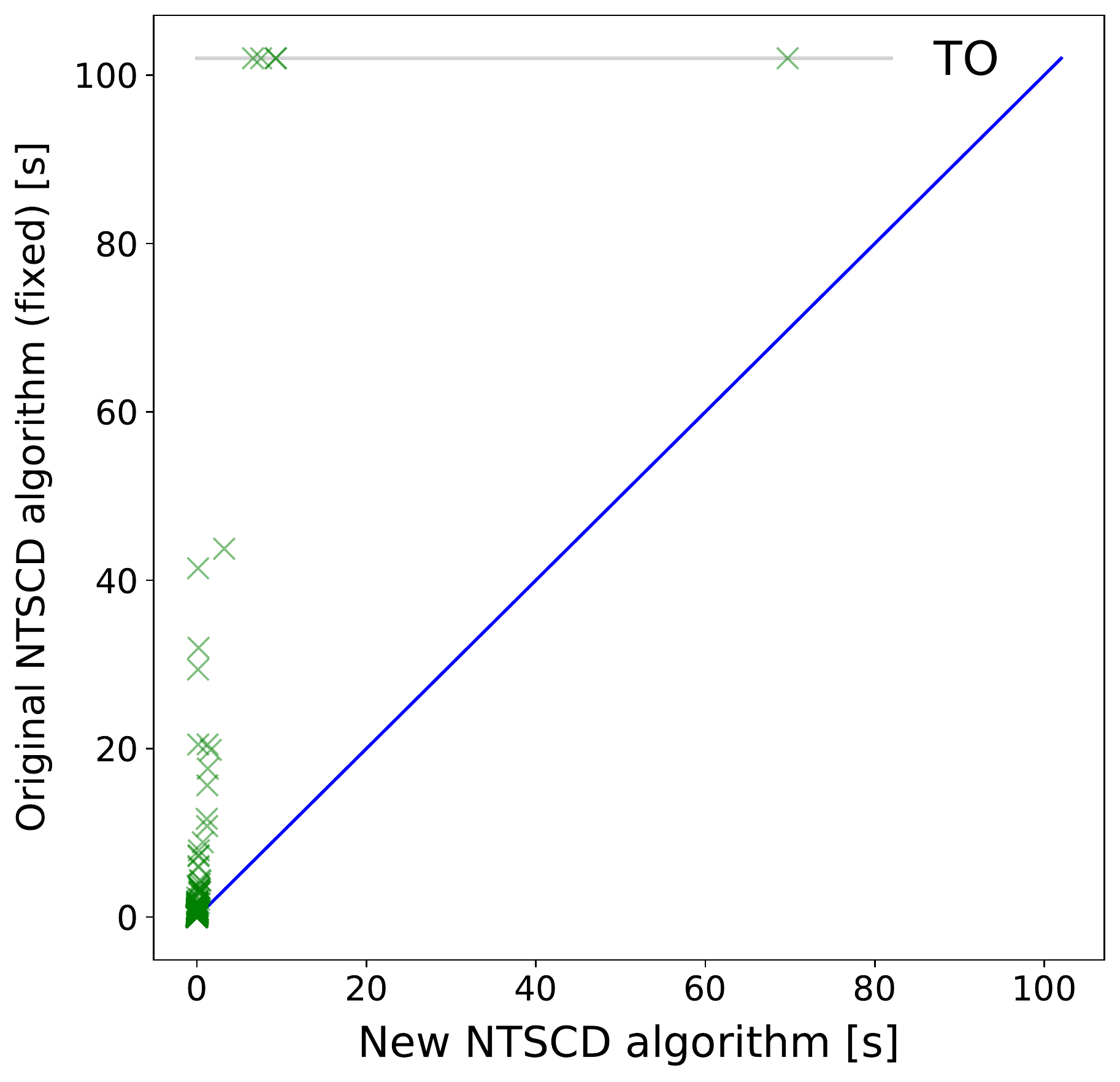}
\end{tabular}
\caption{The comparison of the running CPU time of the new NTSCD algorithm
         and the original NTSCD algorithms.
         The scatter plot on the left shows the comparison with the incorrect
         version of the original algorithm and the scatter plot
         on the right shows the comparison with the fixed version of the
         original algorithm.}
\label{fig:ntscd_comparison}
\end{figure}

We conducted a set of experiments on functions that we extracted
from SV-COMP benchmarks%
~\footnote{\url{https://github.com/sosy-lab/sv-benchmarks}}.
These benchmarks contain many artificial or generated code,
but also a lot of real-life code, e.g., from the Linux project.
%All of these is the domain for control dependence --
%compilers struggle with real-life code whereas program analysis tools
%frequently run on some generated intermediate representation.

Each source code file was compiled with \textsc{clang} into LLVM
and preprocessed by the \texttt{-lowerswitch} pass to ensure that every
basic block has at most two successors. Then we extracted individual
functions.
% In total, we extracted 847074 of them.
As the computation of control dependencies runs swiftly on small functions,
we filtered the extracted functions and kept only those that contain at
least 100 basic blocks. After the deletion of duplicate functions,
we were left with 2440 functions that we used for our experiments.
Run on each function comprises building a directed graph from the basic
blocks of the function and running the given algorithm on this graph.

The experiments were run on machines with \emph{AMD EPYC} CPU with the frequency
3.1\,GHz.  Each benchmark run was constrained to 1 core and 8\,GB of RAM.
For enforcing the resources isolation, we used the tool \emph{Benchexec}%
~\citep{Beyer19a}.
We set the timeout to 100\,s for each run of an algorithm.

In the following text, we denote the Ranganath et al.' algorithms
as the \emph{original} algorithms (we distinguish
between the incorrect and the fixed version at appropriate places)
and the algorithms from this paper as the \emph{new} algorithms.

\subsubsection{NTSCD algorithms}

In the first set of experiments, we compared the new NTSCD algorithm
against the original NTSCD algorithm.
We compared to both, the incorrect version
of the original algorithm and against the fixed version.
Although it seems that comparing to the incorrect version
is meaningless, we did not want to compare only to the fixed
version as the fix that we provided slows down the algorithm.

The results are depicted in Figure~\ref{fig:ntscd_comparison}.
On the left scatter plot, there is the comparison of the new algorithm
to the incorrect original algorithm and on the right scatter plot
we compare to the fixed original algorithm.
As we can see, the new algorithm outperforms the original
algorithm significantly.
Since we did not prove that the fixed original algorithm is correct,
we at least checked the outputs of the new NTSCD and the fixed original
algorithms and found out that they output precisely the same control dependence
relations.
We can also see that the scatter plot on the right contains more timeouts
of the original algorithm. It supports the claim that the fix
slows down the algorithm.

%\begin{figure}
%\includegraphics[width=6cm]{}
%\caption{\todo{Chceme tam tento scatter? Libi se mi, ze je jiny nez ostatni
%         a podporuje to tvrzeni, ze ten fix ho zpomali\dots}}
%\end{figure}

\subsubsection{DOD algorithms}
We compared the new DOD algorithm to the original DOD algorithm with
the fix described in Subsection~\ref{ssec:dod_ranganath}.
As the fix does not change the complexity of the original algorithm,
we do not compare the new algorithm with the incorrect version of the original
algorithm.
The results of the experiments are displayed in Figure~\ref{fig:dod_experiments}
on the left.  We can see that the new algorithm is again very fast.
In fact, the results resemble the results of the pure NTSCD algorithm
(which is more or less a part of the DOD algorithm in the form of computing
$V_p$ sets). Our algorithm takes the advantage that it can answer
early that there are no DOD dependencies for a predicate node.
Although this fact does not change the worst-case complexity,
it helps in practice.
%The original algorithm always iterates over all triples
%$(p, n, m) \in \preno{G} \times V\times V$.

\begin{figure}[t]
\begin{tabular}{c p{1mm} c}
\hspace*{-3mm}
\includegraphics[width=6.5cm]{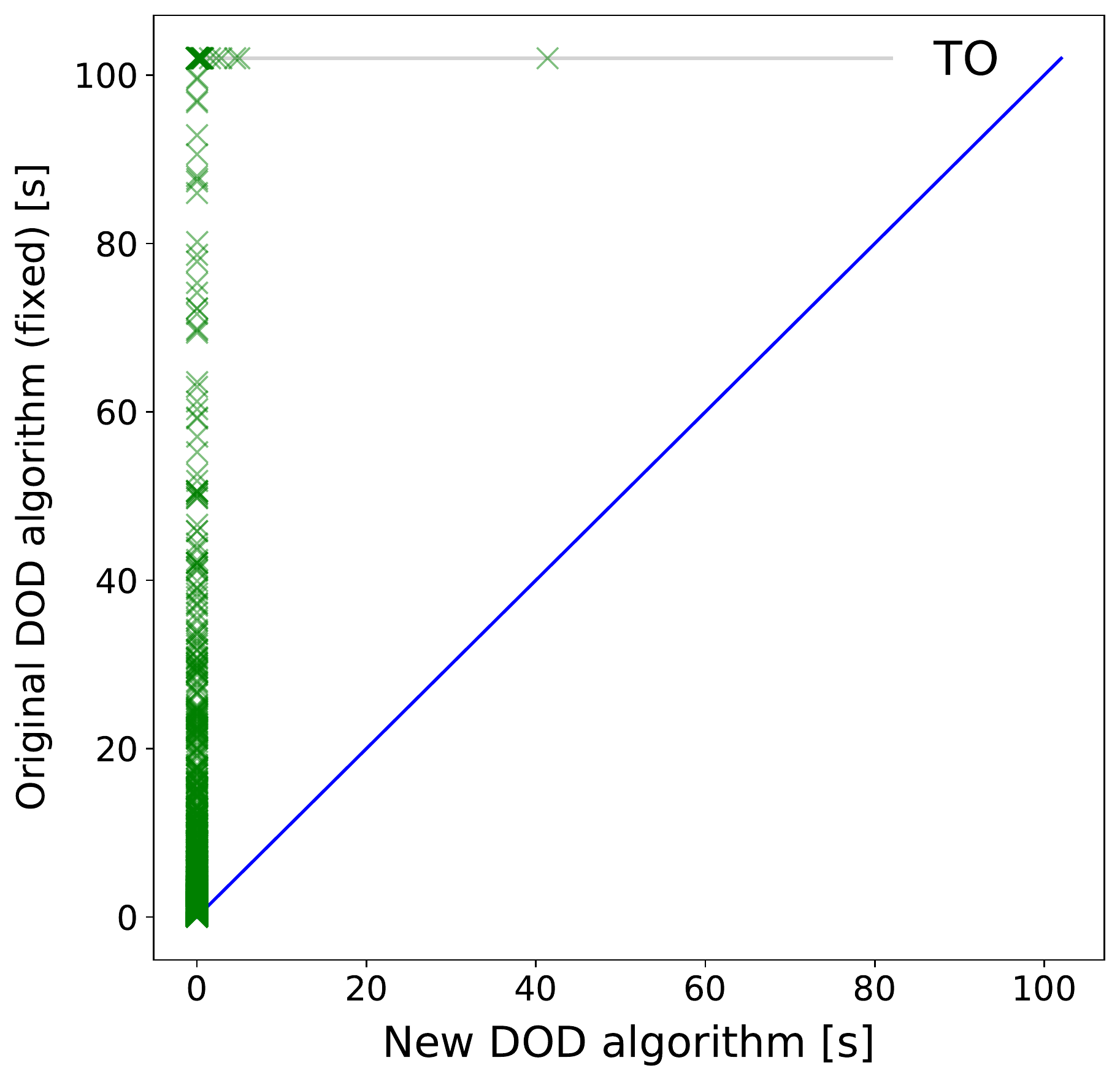}
&&
\includegraphics[width=6.5cm]{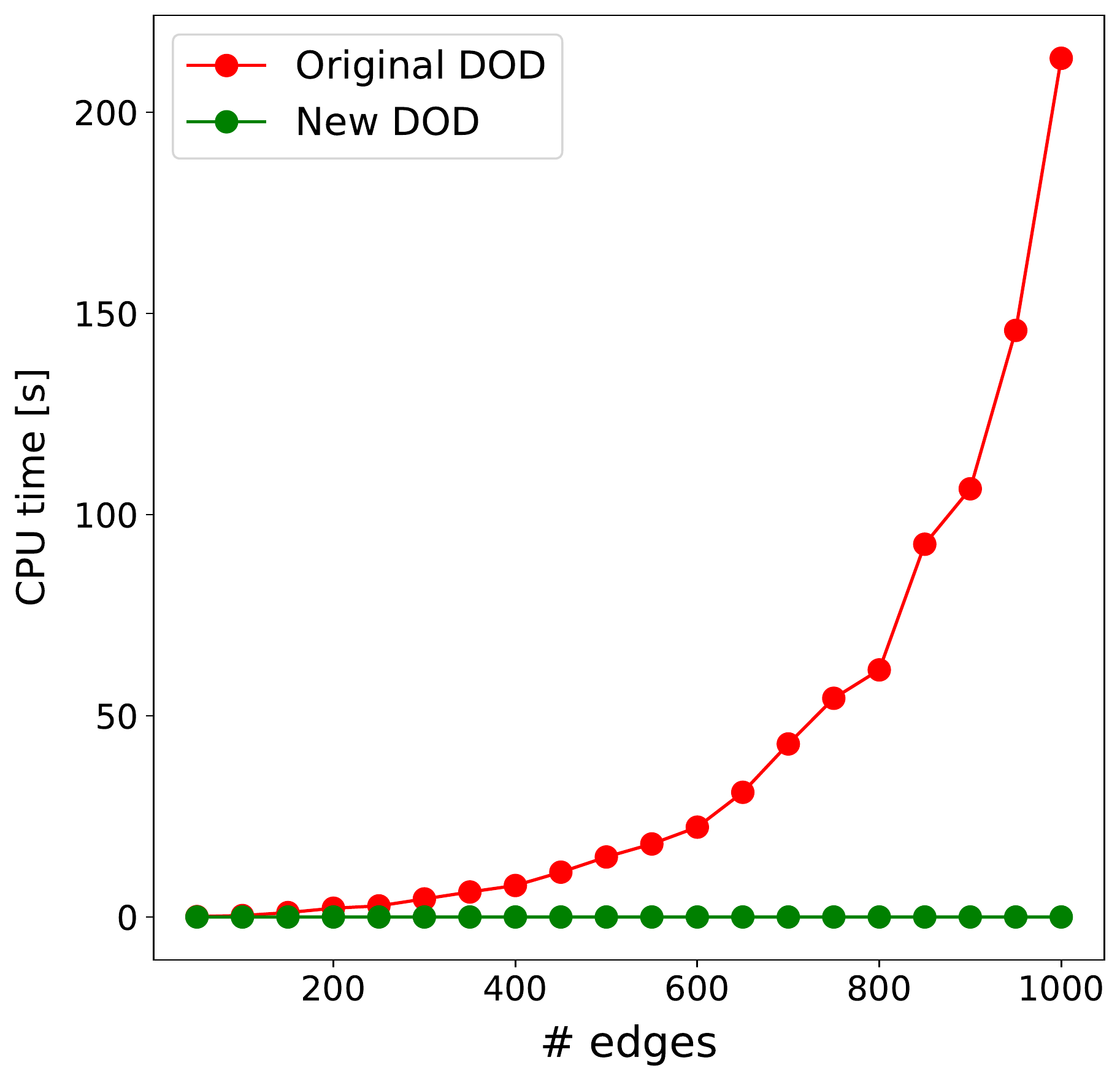}
\end{tabular}
\caption{On the left are compared the CPU running times of the original (fixed)
         and the new DOD algorithms. On the right, the plot shows the running
         time of the algorithms on random graphs with 500 nodes and the number
         of edges as specified by the $x$-axis. All CPU time is in seconds.}
\label{fig:dod_experiments}
\end{figure}

Irreducible graphs are rather rare in common code
(because structured programs yield reducible graphs).
Therefore, we tested the DOD algorithms also on randomly generated graphs,
where we can expect that an irreducible graph emerges more often.
Figure~\ref{fig:dod_experiments} on the right shows the results for graphs that
have 500 nodes and 50, 100, 150, \dots randomly distributed edges
(such that every node has at most two successors).
The running CPU time is an average of 10 measurements.

We can see that the new algorithm is agnostic to the number of edges.
Its running time in this experiment ranges from
$4.12\cdot 10^{-3}$ to $8.89\cdot 10^{-3}$ seconds.
The original DOD algorithm does not scale well
with the increasing number of edges.

\subsubsection{Comparison to control closures}

As computing strong control closure may be equivalent to computing
NTSCD and DOD (see Section~\ref{sec:cc}), we conducted a set of experiments
also with strong control closures.
One obstacle in these experiments was that control closures need a starting
set of nodes that is going to be closed.
We decided to run these experiments only on the subset of
functions that have at least two exit points and set as the starting set
the entry point and one of the exit points.
This approach makes the closure algorithm to compute which nodes may influence
getting to the other exit node.
The new algorithm for the computation of DOD and NTSCD was setup to compute
all the control dependencies for the given function and then
construct the closure of the same set of starting nodes.

\begin{figure}
\includegraphics[width=6.5cm]{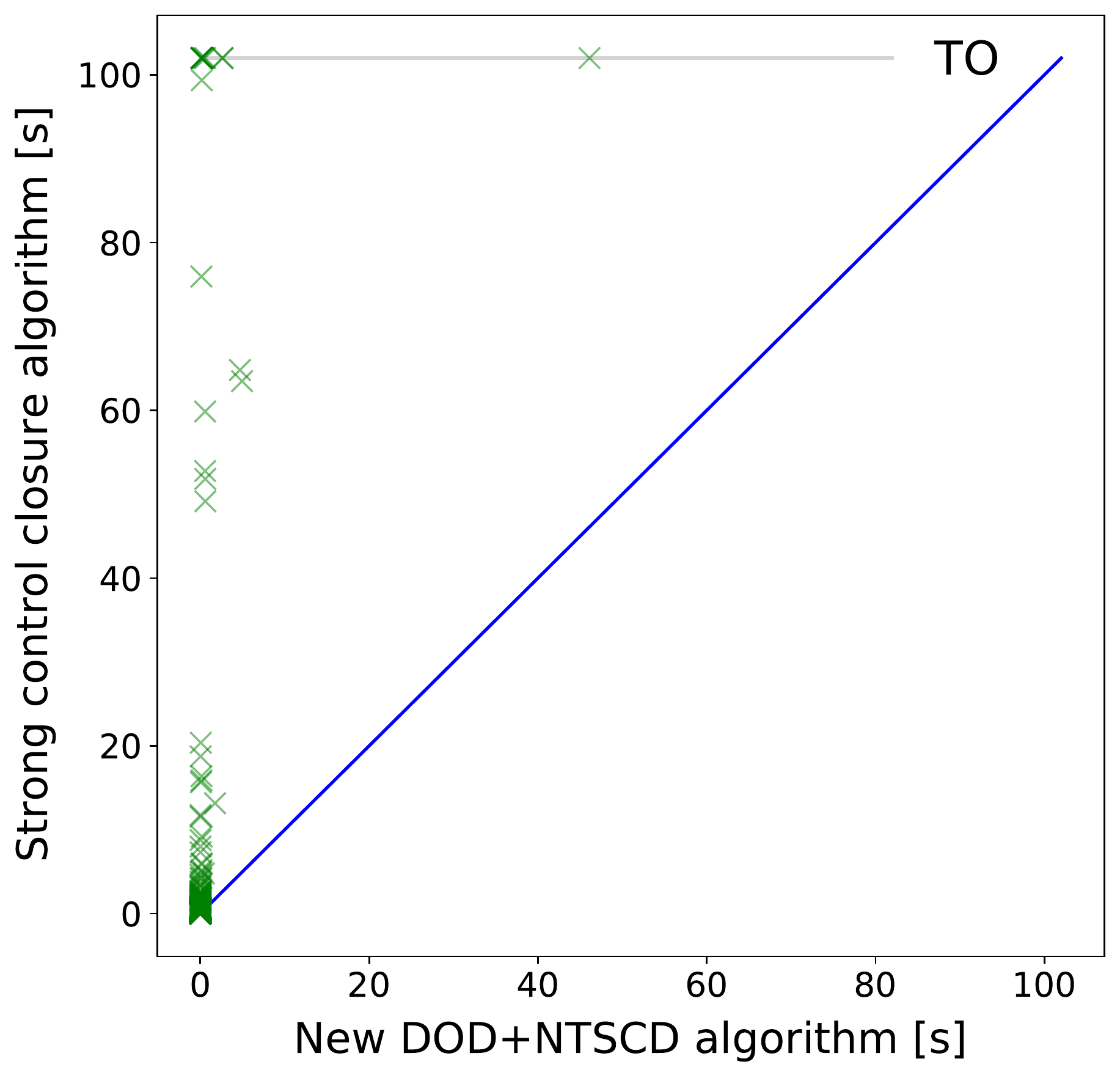}
\caption{The comparison of the CPU running time of computing strong control
         closure using the \citet{Danicic11} algorithm and using
         the new DOD + NTSCD algorithm.}
\label{fig:scc}
\end{figure}

Results are shown on the scatter plot in Figure~\ref{fig:scc}.
As expected, our algorithm runs similarly fast as in the other experiments
and it outperforms the algorithm for the computation of control closures
that has worse complexity.
An observant reader can notice that in this set of experiments, the new
algorithm has no result around 15 seconds even when the pure NTSCD
computation has such results. It turns out that the computation
of $A_p$ sets (which is basically the new NTSCD algorithm)
that gathers the marks into bitvectors is in practice faster than
the NTSCD algorithm (including deciding the NTSCD relation).
The main reason for this is that the implementation with bitvectors
does not need to initialize the counters for each call of $\compute$.

%\subsubsection{Final remarks}

%}}}
%{{{ conclusion

\section{Conclusion}
\label{sec:conclusion}

\begin{table}[tb]
  \caption{Overview of discussed algorithms and their complexities on a CFG $(V,E)$}
  \centering
  \begin{tabular}{lll}
  \toprule  
    relation/closure & algorithm & complexity\\ \midrule
    NTSCD     & original algorithm by \citet{Ranganath07} & $O(|V|^4\cdot \log|V|)$ \\
              & fixed original algorithm by \citet{Ranganath07} & $O(|V|^5)$ \\
              & new algorithm & $O(|V|^2)$ \\ \midrule
    DOD       & original algorithm by \citet{Ranganath07} & $O(|V|^5\cdot \log|V|)$ \\
              & fixed original algorithm by \citet{Ranganath07} & $O(|V|^5\cdot \log|V|)$ \\
              & new algorithm & $O(|V|^3)$ \\ \midrule
    strong CC & original algorithm by \citet{Danicic11} & $O(|V|^4)$ \\
              & new NTSCD+DOD algorithm and computation of closure & $O(|V|^3)$ \\
    \bottomrule  
  \end{tabular}
  \label{tab:all}
\end{table}

We studied algorithms for the computations of strong control
dependencies, namely non-termination sensitive control dependence
(NTSCD) and decisive order dependence (DOD) by \citet{Ranganath07} and
strong control closures (CC) by \citet{Danicic11} on arbitrary control
flow graphs where each branching statement has two successors. We have
demonstrated flaws in the original algorithms for computation of NTSCD
and DOD and we have suggested corrections. Moreover, we have
introduced new algorithms for NTSCD, DOD, and CC that are
asymptotically faster (see Table~\ref{tab:all}).  All the mentioned
algorithms have been implemented and our expriments confirm better
performance of the new algorithms.

%% Acknowledgments
\begin{acks}                            %% acks environment is optional
                                        %% contents suppressed with 'anonymous'
  %% Commands \grantsponsor{<sponsorID>}{<name>}{<url>} and
  %% \grantnum[<url>]{<sponsorID>}{<number>} should be used to
  %% acknowledge financial support and will be used by metadata
  %% extraction tools.
  This material is based upon work supported by
  the Czech Science Foundation grant GA18-02177S.

 %\grantsponsor{GS100000001}{National Science
 %  Foundation}{http://dx.doi.org/10.13039/100000001} under Grant
 %No.~\grantnum{GS100000001}{nnnnnnn} and Grant
 %No.~\grantnum{GS100000001}{mmmmmmm}.  Any opinions, findings, and
 %conclusions or recommendations expressed in this material are those
 %of the author and do not necessarily reflect the views of the
 %National Science Foundation.
\end{acks}

%}}}

%% Bibliography
\bibliography{references}

\end{document}